\newtheorem{thm}{Theorem}
\newtheorem{defn}{Definition}
\newtheorem{lem}{Lemma}
\newtheorem{cor}{Corollary}
\newtheorem{claim}{Claim}
\begin{document}

%\author{
%	Michael Simpson\\
%	\texttt{simpsonm@uvic.ca}
%	\and
%	Venkatesh Srinivasan\\
%	\texttt{venkat@cs.uvic.ca}
%	\and
%	Alex Thomo\\
%	\texttt{thomo@cs.uvic.ca}
%}

%\maketitle

\begin{frontmatter}

\title{Clearing Contamination in Large Networks}

\author[us]{Michael Simpson}
\ead{simpsonm@uvic.ca}

\author[us]{Venkatesh Srinivasan}
\ead{venkat@cs.uvic.ca}

\author[us]{Alex Thomo}
\ead{thomo@cs.uvic.ca}

\address[us]{University of Victoria}

\begin{abstract}
In this work, we study the problem of clearing contamination spreading through a large network where we model the problem as a graph searching game. The problem can be summarized as constructing a search strategy that will leave the graph clear of any contamination at the end of the searching process in as few steps as possible. We show that this problem is NP-hard even on directed acyclic graphs and provide an efficient approximation algorithm. We experimentally observe the performance of our approximation algorithm in relation to the lower bound on several large online networks including Slashdot, Epinions and Twitter. The experiments reveal that in most cases our algorithm performs near optimally.
\end{abstract}

\begin{keyword}
Social Networks \sep Graph Searching \sep Approximation Algorithms
\end{keyword}

\end{frontmatter}

\section{Introduction}

%There exists a large body of recent work which studies how information or influence propagates through a social network which is applicable to other types of networks such as those mentioned previously. Specifically, \cite{budak2011limiting, Meier2008, Bharathi2007, Carnes2007} focused on the task of limiting the spread of misinformation in a social network while we study the stronger model of eliminating disinformation, or any kind of contamination, from a general network.

Contamination in a network may refer to information propagating through an online social network, viruses spreading through a water network, sickness spreading though a population, or a number of other situations. In particular, we are interested in studying social networks as they allow for the widespread distribution of knowledge and information in modern society. They are rapidly becoming a place where people go to hear the news and discuss personal and social topics. In turn, the information posted can spread quickly through the network eventually reaching a large audience, especially so for influential users. However, information spread in a social network can have either positive or negative effects. For example, posting about natural disasters or warfare can either help or hinder other users depending on whether the information is accurate or not. In other cases, the information can be strictly detrimental, such as defamatory statements about private corporations or people and rumours negatively affecting the financial markets. Thus, since many people today learn of news or events online it is important to have tools to eliminate, not just minimize, the effects of disinformation. Previous work has focused on the task of limiting the spread of misinformation \cite{budak2011limiting, Meier2008, Bharathi2007, Carnes2007} while we study the stronger model of eliminating disinformation, or any kind of contamination, from a general network. 

For a contaminated network, we model the problem in the context of graph searching; a classical game on graphs \cite{Parsons1976, Borie2011, Dendris1994}. In the graph searching game we may think of a network whose edges are contaminated with a gas and the objective is to clean the network with some number of searchers. However, the gas immediately recontaminates cleared edges if its spreading is not blocked by guards at the vertices. The model does not assume knowledge of the location of the gas, yet guarantees its elimination at the end of the search strategy, and assumes an edge is deterministically contaminated, as opposed to probabilistically, which represents the case of a powerful adversary.

In the pioneering work of Brandenburg and Herrmann \cite{Brandenburg2006UQ} the dual to the well studied \emph{search number} (the minimum number of searchers required to clear a graph), search time, was introduced as a new cost measure in graph searching. Naturally, we believe it is more important to clear the network as quickly as possible when dealing with a contaminant. Furthermore, until now the theory community has mainly focused on the search number of an undirected graph, but one needs to study the more general case of directed graphs as many real world networks lend themselves to be modelled as directed.

We study the problem of minimizing the time required to eliminate the contamination in the network given a budget of searchers. We prove that the search time problem is NP-complete even for directed acyclic graphs (DAGs) and introduce an approximation algorithm for clearing DAGs. Furthermore, we propose a method for clearing a network by first reducing it to a DAG which can be cleared by our approximation algorithm. Additionally, we investigate the merits of a split and conquer style strategy and show that our strategy, which instead has searchers staying together as a group, outperforms the (intuitively appealing) split and conquer strategy on a broad class of DAGs. Along the way we prove lower bounds on the time required to search a directed graph and introduce a novel DAG decomposition theorem.

We note that the study of search time is intrinsically more difficult than computing the search number as we can no longer be ``strategy oblivious''. By that we mean, when studying the search number, one is only interested in knowing whether some search strategy exists to clear a graph with some number of searchers. In this setting, how that strategy works is irrelevant to the end goal. In contrast, trying to compute the search time of a graph is closely tied to how the strategy actually plays out.

Our main contributions can be summarized as follows. 

\begin{enumerate}
\item We are the first to investigate the search time of directed graphs.
\item We prove the search time problem is NP-complete on DAGs.
\item We devise an approximation algorithm for clearing DAGs that also outperforms split and conquer strategies on a broad class of DAGs.
\item We introduce a novel DAG decomposition theorem which we believe is of independent interest.
\item We provide an experimental study of clearing large real and synthetic networks.
\end{enumerate}

We start with an overview of information propagation in social networks and the graph searching problem in Section 2. In Section 3 we introduce the necessary concepts and definitions from graph searching. Section 4 presents the lower bound for search time on directed graphs. In Section 5 we prove the NP-hardness of the search time problem on DAGs. We introduce our strategy for clearing general networks and the Plank algorithm in Section 6. Section 7 contains our approximation bounds, comparison to the split and conquer strategy, along with our DAG decomposition theorem. Finally, in Section 8 we provide our experimental results.

\section{Related Work}

The task of maximizing the spread of information in a social network is a well studied problem with many works investigating different aspects of the problem \cite{Chen2013, Goyal2013, Chen2010, Liu2012}.  More recently, the problem of limiting the spread of rumours or misinformation in a social network has been studied by \cite{budak2011limiting, Bharathi2007, Carnes2007}. In \cite{Bharathi2007, Carnes2007} the problem is posed in terms of competing campaigns while \cite{budak2011limiting} has the misinformation diffusing through a network. All three works are modelled by the Independent Cascade Model: a randomized diffusion process on graphs. However, the location of the misinformation is known and nodes can be inoculated such that once a node takes on the ``good'' information it will not subsequently adopt the misinformation. While the goal of these works was to limit the spread of misinformation, we believe it is important to investigate how to remove the misinformation from a network in its entirety. Furthermore, the unknown location of the misinformation and the deterministic spreading of contamination in our model captures the case of a stronger adversary.

Several variants of the (undirected) graph searching problem with respect to search number have been studied with varying constraints and adversary behaviour, see e.g. \cite{Dendris1994, Blin2008DCN, Borie2011, Kirousis1986SP, Ellis1994}. In addition, it has been shown that the graph searching problem is closely related to several other notable graph parameters such as path-width, cut-width and vertex separation, see e.g. \cite{Bienstock1991, Kirousis1986SP, Ellis1994}. It was shown by Megiddo et al.\ \cite{Megiddo1988} that computing the search number is NP-complete on general undirected graphs, but can be computed in linear time on undirected trees.

%In the directed setting, the notion of \emph{DAG-width} was introduced by Berwanger et al.\ \cite{Berwanger2012} as a natural adaptation of treewidth (which characterizes search number) to directed graphs. They study the search number problem for directed graphs and prove similar results to those for treewidth.

The notion of search time for undirected graphs was introduced by Brandenburg and Herrmann \cite{Brandenburg2006UQ}. They note that the classical goal of the graph searching game where the minimal search number is computed aims to minimize the number of resources used and as such corresponds to space complexity. They study the length of a search strategy which corresponds to the time complexity of searching a graph. They ask, how fast can a team of $k$ searchers clear a graph (if at all), and conversely how many searchers are needed to search a graph in time $t$. %The objective of that paper is to generalize known results regarding search number to search time. They are successful in showing monotone graph searching as well as establishing an equivalence between search time, pathlength, interval length, and vertex length.

\section{Preliminaries}

We consider the graph searching game on simple, weakly connected, directed graphs $G = (V, E)$ with $n$ nodes, a set of vertices $V$ and a set of edges $E$. We assume there are no self-loops and no multiple edges. A directed graph is considered weakly connected if removing the directions on all edges yields an undirected graph which is connected. For a directed edge $(u,v)$ we refer to $u$ as the start node and $v$ as the end node. Also, we will use the term ``digraph'' when referring to directed graphs.

The rules for the graph searching game are as follows: Initially, all edges are contaminated and in the end all edges must be cleared. In a move at each time $t = 1, 2, \dots$ searchers (or guards) are \emph{first} removed from vertices and then placed on other (and possibly the same) vertices. In a single move some number of searchers can be placed or removed subject to the searcher budget. An edge is \emph{cleared} at time $i$ if both incident nodes have searchers placed on them at the end of time $i$. A cleared edge $e$ is instantaneously \emph{recontaminated} if there is a directed path from a contaminated edge to $e$ without a searcher on any vertex of that path. A \emph{search strategy} is a sequence of moves that results in all edges being cleared at the end. Then the search game is won.

In the following example we show one possible search strategy with four available searchers for the directed graph shown in Figure~\ref{fig:example_strat}. In the first step, searchers are placed on nodes 1, 2, 3, and 4 clearing the three blue (double-wide) edges. In the second step, searchers are removed from nodes 1, 2, and 3 to be placed on nodes 5, 8, and 6. We clear another three edges, and mark cleared edges in green (dotted). Finally, in a third step, we remove searchers from nodes 4 and 5, and place them on nodes 7 and 9. We clear the final three edges in the third step leaving the graph with all its edges cleared.

\begin{figure}
	\centering
	\includegraphics[scale=0.5]{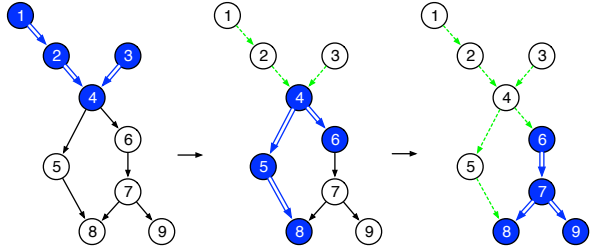}
	\caption{An example search strategy}
	\label{fig:example_strat}
\end{figure}

Our formal definition is similar to that of Brandenburg and Herrmann \cite{Brandenburg2006UQ}.

\begin{defn}
A search strategy $\sigma$ on a (connected) digraph $G=(V,E)$ is a sequence of pairs $\sigma = ((E_0, V_0),\\ (E_1, V_1), \dots, (E_t, V_t))$ such that:

\begin{enumerate}
\item For $i = 0, \dots, t$, $E_i \subseteq E$ is the set of cleared edges and $V_i \subseteq V$ is the set of vertices which have searchers placed on them at time $i$. The edges from $E \setminus E_i$ are contaminated.
\item (initial state) $E_0 = \emptyset$ and $V_0 = \emptyset$. All edges are contaminated.
\item (final state) $E_t = E$ and $V_t = \emptyset$. All edges are cleared.
\item (remove and place searchers and clear edges) For $i = 0, \dots, t-1$ there are sets of vertices $R_i = V_i \setminus V_{i+1}$ and $P_i = V_{i+1} \setminus V_i$ where searchers are removed from the vertices from $R_i$ and then placed at $P_i$. The set of cleared edges is $E_{i+1} = \{ (u, v) \in E \: | \: u, v \in V_{i+1}$; or $(u, v) \in E_i \: |$ there is no unguarded directed path from the end node of a contaminated edge to u$\}$.
\end{enumerate}

Let $width(\sigma) = max\{ | V_i | \: | \: i = 0, \dots, t \}$ and $length(\sigma) = t - 1$ be the number of searchers and the number of moves of $\sigma$ respectively. Note that we discard the last move, which only removes searchers.
\end{defn}

While we need the $E_i$ sets above to define how a strategy works, we only need the $V_i$ sets to fully determine a strategy. Therefore, we will often refer to a strategy by only listing its $V_i$ sets.

\begin{defn}
For a connected digraph $G$ with at least two vertices and integers $s$ and $t$ let search-width$_G(t)$ be the least $width(\sigma)$ for all search strategies $\sigma$ with $length(\sigma) \leq t$ and let search-time$_G(s)$ be the least $length(\sigma)$ for all search strategies $\sigma$ with $width(\sigma) \leq s$.
\end{defn}

In other words, \emph{search-width$_G(t)$} is the least number of searchers that can search $G$ in time at most $t$, and \emph{search-time$_G(s)$} is the shortest time such that at most $s$ searchers can search $G$. Thus, \emph{search-width}$_G(t) = s$ implies \emph{search-time}$_G(s) \leq t$ and conversely \emph{search-time}$_G(s) = t$ implies \emph{search-width}$_G(t) \leq s$.

For a given time $t$, $\sigma$ is \emph{space-optimal} if $width(\sigma) =$ \emph{search-width$_G(t)$} with $length(\sigma) = t$. For a given number of search\-ers $s$, $\sigma$ is \emph{time-optimal} if $length(\sigma) =$ \emph{search-length$_G(s)$} with $width(\sigma) = s$. 

\section{Search-Time Lower Bound}

The lower bound for search time on a digraph does not come as easily as the lower bound for undirected graphs of $\big \lceil \frac{n-s}{s-1} \big \rceil + 1$ shown by Brandenburg and Herrmann \cite{Brandenburg2006UQ} since the reasoning used there does not apply to the directed case. That is, a search strategy on a digraph can leave a node unguarded without suffering from recontamination unlike in the undirected case. We follow a completely different avenue to the lower bound.

Given a search strategy $\sigma$ we can construct a set system $S = \{ S_1, \dots, S_t \}$ where each set corresponds to the placement of searchers in a single step of $\sigma$. Thus, $t$ represents the number of steps the strategy requires. We have the following conditions for such a set system to correspond to a valid and complete search strategy.

\begin{enumerate}
\item $ | S_i | \le s $
\item If $u, v$ are adjacent nodes in $G$ then there exists an $S_i$ where $u, v \in S_i$
\end{enumerate}

The first condition reflects the fact that we have $s$ searchers to work with while the second condition ensures that every edge in $G$ will be cleared.

As a result we have the following fact about $S$.

\begin{equation}
\forall i \hspace{1mm} \exists j \hspace{1.5mm} \text{such that} \hspace{1.5mm} S_i \cap S_j \neq \emptyset
\label{eqn:lb}
\end{equation}

Equation \ref{eqn:lb} comes from condition 2 and the fact that $G$ is connected since a set $S_i$ without an intersection with some other set would constitute a separate connected component violating our assumption of connectedness.

Note, a search strategy will also induce an ordering of $S$, $\Omega$, which dictates how the search strategy unfolds. Notice that every search strategy induces a unique set system while a given set system may correspond to several search strategies depending on the ordering. Next we define the \emph{progress} of a set which will be utilized in the lower bound proof.

\begin{defn}
The $progress$ of a set $S_i$ in an ordering $\Omega$ is the number of elements of $S_i$ which have not been seen in any previous set in $\Omega$.
\end{defn}

The progress of a set corresponds to the number of new nodes visited in that step of the corresponding search strategy.

Now, we present the search time lower bound on directed graphs which utilizes the set system notion.

\begin{thm}
For every connected digraph $G$ with $| G | = n$ and integer $s$ such that $s$ is at least the search number of $G$ all search strategies require at least $\big \lceil \frac{n-s}{s-1} \big \rceil + 1$ steps to clear $G$.
\end{thm}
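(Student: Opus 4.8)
The plan is to argue entirely in terms of the induced set system $S=\{S_1,\dots,S_t\}$ and to lower-bound the number of sets $t$ (equivalently, the number of steps of the strategy) by a counting argument on \emph{progress}. Two preliminary observations set this up. First, $\bigcup_i S_i = V$: since $G$ is weakly connected with at least two vertices, every node is incident to some edge, and condition~2 forces both endpoints of that edge to appear together in some $S_i$, so every vertex occurs in at least one set and the union has exactly $n$ elements. Second, and this is the observation that makes the whole argument go through, for \emph{any} ordering $\Omega$ of the sets we have $\sum_i \mathrm{progress}(S_i) = n$, because each vertex is counted as ``new'' exactly once, in the first set (under $\Omega$) that contains it. This order-invariance is crucial: although a strategy fixes one particular ordering, it lets us analyse the same set system under whatever ordering is most convenient.

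Next I would upgrade the intersection property in Equation~\ref{eqn:lb} to full connectivity of the \emph{intersection graph} $H$, whose vertices are the sets $S_1,\dots,S_t$ with an edge between two sets whenever they share a node. Connectivity of $H$ follows from weak connectivity of $G$: if $H$ split into nonempty parts $A$ and $B$ with no set in $A$ meeting any set in $B$, then the vertices covered by $A$-sets and those covered by $B$-sets would be disjoint, and by condition~2 no edge of $G$ could join an $A$-covered vertex to a $B$-covered one (such an edge would force its endpoints into a common set, which would then belong to both parts), contradicting weak connectivity. Since $H$ is connected, a spanning-tree (or BFS) traversal of $H$ yields an ordering $S_{\pi(1)},\dots,S_{\pi(t)}$ in which every set after the first shares at least one vertex with the union of the earlier sets.

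Under this connected ordering the progress of the first set is at most $|S_{\pi(1)}| \le s$, while each later set already contains at least one previously seen vertex and so has progress at most $s-1$. Combining this with the invariant from the first paragraph gives
\begin{equation}
n \;=\; \sum_{i} \mathrm{progress}(S_{\pi(i)}) \;\le\; s + (t-1)(s-1),
\end{equation}
which rearranges to $t-1 \ge \frac{n-s}{s-1}$; as $t-1$ is an integer this yields $t \ge \big\lceil \frac{n-s}{s-1} \big\rceil + 1$, the claimed bound. The main obstacle I anticipate is precisely the step that licenses the per-set bound of $s-1$: one must rule out an ordering in which some later set introduces $s$ entirely new vertices. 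The resolution is the two-part device above, namely that (i)~total progress is independent of the ordering, so we may replace the strategy's own ordering by a connected traversal of $H$ without changing either $t$ or the sum, and (ii)~connectivity of $G$ guarantees such a traversal exists. A minor point to check at the end is that $s \ge 2$ (so that $s-1 \ge 1$), which is immediate since clearing any single edge already requires two simultaneously placed searchers.
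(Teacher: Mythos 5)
Your proposal is correct and follows essentially the same route as the paper: build the set system, form the intersection (meta-)graph, use its connectivity to produce a DFS/BFS-style ordering in which the first set contributes at most $s$ new vertices and every later set at most $s-1$, and compare against the total progress of $n$. The only difference is cosmetic --- your partition argument for connectivity of the intersection graph is actually slightly more careful than the paper's appeal to Equation~\ref{eqn:lb} alone, but it does not change the structure of the proof.
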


\begin{proof}
Assume we are given an arbitrary search strategy $\sigma$ for $G$. First, we construct the corresponding set system $S$ for $\sigma$. Then, we construct a meta-graph on $S$ where each meta-node represents a set $S_i \in S$ and there is an \emph{undirected} edge between two meta-nodes if their corresponding sets have a non-empty intersection. Call the resulting graph $G_S$. Then, notice that equation (1) and our assumption of connectedness implies that $G_S$ is connected.

Now, we present a special ordering $\Omega'$ for $S$ by performing a depth-first search of $G_S$ initialized on any node of $G_S$. The order in which meta-nodes are visited in the DFS makes up $\Omega'$. This ordering may differ from that of $\sigma$ and is created purely for the proof of bounding the number of sets, $t$.

Then, we can bound the progress $\rho$ made by this ordering as follows. The first set visited in $\Omega'$ has a progress bounded above by $s$ from condition (1) and the fact that there are no previous sets in $\Omega'$. Then, every subsequent set $S_i$ in $\Omega'$ has a progress bounded above by $s-1$ since, by the DFS style ordering, there will be a set located earlier in $\Omega'$ which was connected to $S_i$, indicating a non-zero intersection. Thus, if there are $t$ sets, the total progress is bounded above by $s + (t-1)(s-1)$. Furthermore, $\rho$ is bounded below by $n$ as it is a necessary condition that every node in $G$ be visited by a searcher in order to clear all edges of $G$.

Thus, we have

\begin{align*}
n \le \rho &\le s + (t-1)(s-1) \\
n - s &\le (t-1)(s-1) \\
\frac{n-s}{s-1} &\le t - 1 \\
\end{align*}

Finally, since $t$ must be an integer we have

\begin{equation*}
t \ge \Big \lceil \frac{n-s}{s-1} + 1 \Big \rceil = \Big \lceil \frac{n-s}{s-1}\Big \rceil + 1
\end{equation*}

Therefore, we have shown that for an arbitrary search strategy, the corresponding set system requires at least $\big \lceil \frac{n-s}{s-1}\big \rceil + 1$ sets and thus any search strategy for $G$ must take at least this number of steps.
\end{proof}

In the next section we prove the hardness of computing the search time of a DAG.

\section{Hardness for DAGs}

In this section we prove that computing the search time of a DAG for a given number of searchers is NP-complete. Consider the GRAPH SEARCHING problem as determining the minimum number of steps required to clear an input directed graph $G$ on $n$ nodes with $s$ searchers. The decision version asks if $G$ can be cleared in $t$ steps. To do this we introduce two concepts required for the hardness proof: \emph{B-sections} and the \emph{loss} function. First, consider a branching node $v$ attached to $m$ directed paths $b_1, b_2, \dots, b_m$ all beginning at $v$ where $|b_i| \geq 1$. We refer to such structures as \emph{B-sections} and a sample $B$-section can be seen in Figure \ref{fig:hardness_example}. $B$-sections will be used in Section 5 for our decomposition theorem. Next, we define our \emph{loss} function. We know that an optimal strategy with $s$ searchers visits $s$ new nodes in the first step and $s-1$ new nodes in each subsequent step. All strategies can visit $s$ new nodes in the first step. Thus, a non-optimal searcher placement is one in which $s-1$ new nodes are not visited in a given step. Note, this excludes the final step where there may not be enough nodes left to visit $s-1$ new nodes. For this reason, an alternative definition is a placement in which two or more searchers are left stationary.

\begin{defn}
The loss function associated with a search strategy $\sigma$, denoted $loss(\sigma)$, is a count of the number of non-optimal searcher placements in $\sigma$.
\end{defn}

\begin{figure*}
	\centering
		\includegraphics[scale=0.5]{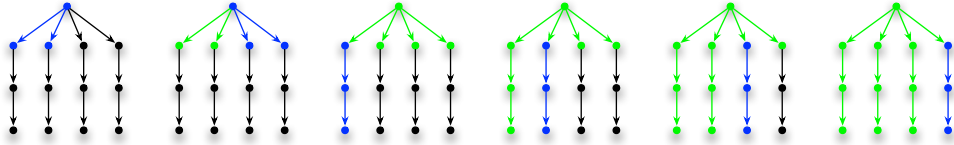}
	\caption{An example search strategy with zero loss}
	\label{fig:hardness_example}
\end{figure*}

Now, consider the $B$-section in Figure \ref{fig:hardness_example} and the search strategy shown which uses 3 searchers. The blue nodes represent the searcher placements of the current step while green nodes represent already visited nodes. The blue edges indicate edges which are being cleared in the current step while green edges are edges which have been cleared in a previous step. Notice how the search strategy partially clears each of the branches before finishing them off in a single step. The ability to avoid loss when moving to a new branch lay in the strategy's ability to ``set up'' the number of nodes left in each branch, after partially clearing the branches, as a multiple of $s-1$. This ensured that the clearance of each branch ended exactly at the leaf nodes and did not spill over into the next branch.

Now, we can generalize this idea to capture how a strategy would have to behave to ``set up'' the branches of a general $B$-section in a similar fashion in order to achieve zero loss. Consider a $B$-section with $m$ branches $b_1, \dots, b_m$ each of length $d_1, \dots, d_m$ where $d_i$ counts all nodes in $b_i$ other than the branching node. The question of whether or not zero loss can be achieved comes down to whether we can end the clearance of each branch exactly at the branch's final node. Therefore, we are asking whether we can move across the top of the $B$-section with the $s$ searchers such that after this initial sweep the number of nodes remaining to be cleared in each $b_i$ is a multiple of $s-1$. If this is the case, the branches could then be cleared one at a time with all $s$ searchers with the clearance ending exactly at the last node of each $b_i$ ensuring no loss when moving between branches.

The problem of the initial sweep across the top of the $B$-section can be phrased as an instance of a BIN PACKING variant. First, notice that there is a single value $0 \le x_i \le s-2$ for each branch that makes the number of nodes remaining in $b_i$ a multiple of $s-1$. Thus, we wish to know if we can pack the $x_i$ into bins of size $s-1$ such that each bin is exactly full. The solution to this problem tells us if the $B$-section can be cleared with zero loss. However, we know that not all $B$-sections can be cleared with zero loss and we actually want to know the minimum loss achievable. This leads to a variant of the optimization version of BIN PACKING which we wish to solve. In the standard BIN PACKING problem we wish to minimize the number of bins used. Our problem is asking to minimize the number of partially full bins. That is, we want to maximize the number of exactly full bins as each partially full bin represents a loss. We refer to this as the EXACT BIN PACKING problem and in the decision version denote the number of allowable partially filled bins by the parameter $p$. First, we show that the EXACT BIN PACKING problem remains strongly NP-hard even when $p$ is fixed to 0.

\begin{lem}
The EXACT BIN PACKING problem with $p=0$ is strongly NP-complete.
\label{lem:exact_bin_packing}
\end{lem}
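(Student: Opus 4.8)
The plan is to prove the two directions of NP-completeness separately, establishing the strong form via a parsimonious reduction from \emph{3-Partition}, which is the canonical strongly NP-complete problem whose numbers are polynomially bounded in the input. Membership in NP is immediate: a certificate is simply an assignment of each item $x_i$ to a bin, and one verifies in polynomial time that every nonempty bin sums to exactly $s-1$ (equivalently, that $p=0$ bins are partially filled). Since the packing partitions the multiset of items and the total must equal $(s-1)\cdot(\text{number of bins})$, the number of bins is forced to be $\sum_i x_i / (s-1)$, so no guessing of bin count is needed.

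For hardness I would reduce from 3-Partition. Recall an instance consists of $3m$ positive integers $a_1,\dots,a_{3m}$ and a bound $B$ with $\sum_i a_i = mB$ and $B/4 < a_i < B/2$ for every $i$; the question is whether the $a_i$ can be split into $m$ triples each summing to exactly $B$. Given such an instance, I would build an EXACT BIN PACKING instance by setting the bin size to $s-1 = B$ (i.e.\ $s = B+1$) and taking the item values to be $x_i = a_i$, with the target $p = 0$. The only thing to check on the instance side is that the items respect the problem's range $0 \le x_i \le s-2$: from $a_i < B/2$ we get $x_i < (s-1)/2 \le s-2$ for all $s \ge 3$, and $a_i > B/4 > 0$ gives the lower bound, so the reduction produces a legal instance. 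This map is computable in logarithmic space and preserves the magnitudes of all numbers, which is exactly what is required to transfer \emph{strong} NP-completeness rather than merely (weak) NP-completeness.

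The correctness argument turns on the defining bounds of 3-Partition forcing each exactly-full bin to contain precisely three items. For the forward direction, any valid 3-Partition triple sums to $B = s-1$, so each triple is an exactly full bin and we obtain a packing with $p=0$ partial bins. For the converse, suppose we have a packing in which every bin sums to exactly $s-1 = B$. Because each item satisfies $x_i > B/4$, a bin cannot hold four or more items (their sum would exceed $B$); because each item satisfies $x_i < B/2$, a bin cannot be filled by two or fewer items (their sum would be less than $B$). Hence every bin holds exactly three items, and since $\sum_i x_i = mB$ there are exactly $m$ such bins, yielding a valid partition of the $a_i$ into $m$ triples each summing to $B$. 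Thus the instances are equivalent, and the reduction is in fact parsimonious on the relevant structure.

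The reduction itself is clean, so the main thing to be careful about is not a combinatorial obstacle but the bookkeeping needed to claim the \emph{strong} form: I must emphasize that 3-Partition remains NP-complete under unary encoding and that my transformation leaves all numerical values unchanged (only relabelling $a_i$ as $x_i$ and $B$ as $s-1$), so the hardness survives even when the numbers are encoded in unary. The one place requiring a short verification is the boundary condition $x_i \le s-2$, which I handle above by noting $a_i < B/2$; I would also remark that small degenerate cases ($s \le 2$) are trivial and can be excluded, so that the bound $(s-1)/2 \le s-2$ used in the range check always applies.
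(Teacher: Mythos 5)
Your proof is correct, but it takes a genuinely different route from the paper's. The paper reduces from the decision version of BIN PACKING: given items $X$, bin size $V$, and $b$ bins, it pads the instance with $r = Vb - \sum_i x_i$ unit items so that a successful $b$-bin packing can be topped up into a packing in which every bin is exactly full, and conversely any all-exactly-full packing of the padded instance yields a $b$-bin packing of $X$ after discarding the unit items. You instead reduce directly from 3-PARTITION, setting the bin size to $B$ and using the bounds $B/4 < a_i < B/2$ to force every exactly-full bin to contain precisely three items. Your route is arguably tidier for establishing the \emph{strong} form of the claim: the numbers are passed through unchanged, so no further argument about encodings is needed, whereas the paper's padding argument implicitly requires starting from the strongly NP-complete variant of BIN PACKING (with $V$ and $b$ polynomially bounded) so that the $r$ added unit items do not blow up the instance size --- a point the paper does not spell out. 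What the paper's approach buys is a reduction that works from arbitrary BIN PACKING instances without any structural conditions on the items, which fits more naturally with how EXACT BIN PACKING arises from $B$-sections later in the paper (where the $x_i$ are arbitrary residues in $\{0,\dots,s-2\}$, not values clustered near $B/3$). Two small remarks: your claim that the reduction is ``parsimonious'' is stronger than needed and not quite justified as stated (you only argue equivalence of solvability, not preservation of solution counts), and your verification that $x_i \le s-2$ is checking a constraint inherited from the graph-searching application rather than one imposed by the abstract EXACT BIN PACKING problem itself; neither affects correctness.
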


\begin{proof}
Consider an instance of the decision version of BIN PACKING with items $X = \{x_1, \dots, x_n\}$, bin size $V$, and $b$ available bins. Then, let $r = Vb - \sum x_i$. Here, $r$ is the total remaining bin space (regardless of packing) for the BIN PACKING instance.

Now, we construct an instance of the decision version of the EXACT BIN PACKING problem with items $X' = X \cup 1_r$ where $1_n$ is a set containing $n$ 1's, bin size $V$, and $p=0$. Then, the EXACT BIN PACKING instance has a solution iff there is a solution to the instance of BIN PACKING.

If the BIN PACKING instance can be packed with $b$ bins then in the EXACT BIN PACKING instance we have exactly the required number of 1's to fill in the rest of the space leaving all exactly full bins, i.e. $\sum x_i + r = Vb$. However, if the BIN PACKING instance requires more than $b$ bins then there will not be sufficient 1's to fill in the space of additional bins and therefore there will be at least one partially filled bin, i.e. $\sum x_i + r = Vb < V(b+k)$ for some $k>0$.
\end{proof}

Then, it follows from the above result that the general version of EXACT BIN PACKING with arbitrary $p$ is also strongly NP-hard.

\begin{cor}
The EXACT BIN PACKING problem is strongly NP-complete.
\end{cor}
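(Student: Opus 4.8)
The plan is to establish both membership in NP and strong NP-hardness, combining to give strong NP-completeness. For membership, I would take a packing of the items into bins as the certificate. Given such an assignment, one can in polynomial time tally how many bins are partially filled and test whether that count is at most $p$, so EXACT BIN PACKING lies in NP. All arithmetic involved is summation of item sizes, so this verification is polynomial even under a unary encoding of the numbers, which is the relevant regime for strong NP-completeness.

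For hardness, I would exploit that the general EXACT BIN PACKING problem takes $p$ as part of its input, so the instances with $p=0$ form a subclass of all instances. Lemma~\ref{lem:exact_bin_packing} shows that this very subclass is already strongly NP-complete. Hence the identity map, which sends a $p=0$ instance to itself viewed as a general instance, is a (trivial) reduction from the $p=0$ problem to the general problem. Any algorithm for the general problem therefore solves the $p=0$ case, and NP-hardness of the general problem follows immediately.

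To promote this to strong NP-hardness, I would inspect the numbers produced by the reduction underlying Lemma~\ref{lem:exact_bin_packing}. That reduction builds its EXACT BIN PACKING instance from a BIN PACKING instance by adjoining $r = Vb - \sum x_i$ copies of the unit item. Since BIN PACKING is strongly NP-complete, its hard instances have item sizes and bin capacity $V$ bounded by a polynomial in the input length, and the derived quantity $r$ is then likewise polynomially bounded. Consequently the constructed instances keep all numeric data polynomially bounded, so the identity reduction stays within the strong setting and strong NP-hardness transfers to arbitrary $p$.

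The step meriting the most care is confirming that passing from the $p=0$ restriction to the general problem does not secretly reintroduce large numbers and thereby break the strong regime. This is routine here, since the numeric data is literally unchanged and $p$ may be encoded in unary without affecting polynomiality, but it is the one place where the \emph{strongly} qualifier must be checked explicitly rather than inherited for free.
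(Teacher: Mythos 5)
Your proposal is correct and follows essentially the same route as the paper, which simply observes that the general problem (with $p$ as an input parameter) contains the $p=0$ case of Lemma~\ref{lem:exact_bin_packing} as a restriction, so strong NP-hardness is inherited. Your additional care in checking NP membership under unary encoding and verifying that the numeric data remains polynomially bounded makes explicit what the paper leaves implicit, but it is the same argument.
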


Now, we formally show the hardness of the graph searching problem on $B$-sections using the above result. First, we show that the GRAPH SEARCHING problem remains hard even for fixed $t = \big \lceil \frac{n-s}{s-1} \big \rceil + 1$ and restricted $G$.

Before we present the proof we will introduce some facts about the GRAPH SEARCHING problem. First, the search time of a strategy can be computed from the loss as $t = \big \lceil \frac{n-s + loss}{s-1} \big \rceil + 1$. Also, recall that the lower bound for clearing a graph is $t_{min} = \big \lceil \frac{n-s}{s-1} \big \rceil + 1$. Then, notice that $t_{min}$ can be achieved with a range of losses which depends on the values of $n$ and $s$. Namely, $t_{min}$ will be achieved by any strategy with $0 \leq loss \leq \big [ \big \lceil \frac{n-s}{s-1} \big \rceil - \frac{n-s}{s-1} \big ] (s-1)$. We refer to the upper bound by $loss_{max}$.

\begin{lem}
The GRAPH SEARCHING problem on B\hyp{}sections with $t = t_{min}$ is NP-complete.
\end{lem}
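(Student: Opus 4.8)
The plan is to prove NP-hardness by a reduction from the EXACT BIN PACKING problem with $p=0$, which is strongly NP-complete by Lemma~\ref{lem:exact_bin_packing}, paired with an easy argument for membership in NP. The reduction will exploit the correspondence developed above between the initial sweep across the top of a $B$-section and a packing of the per-branch ``setup'' values $x_i$ into bins of size $s-1$: a $B$-section admits a zero-loss strategy exactly when every bin can be made exactly full.

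First I would fix the translation of parameters. Given an EXACT BIN PACKING instance with items $x_1, \dots, x_m$ and bin size $V$, set the number of searchers to $s = V+1$ so that bins of size $V$ become bins of size $s-1$. For each item $x_i$ I would create a branch $b_i$ whose length $d_i$ is chosen so that its setup value --- the unique residue in $\{0, \dots, s-2\}$ that must be cleared during the sweep to leave a multiple of $s-1$ nodes --- equals $x_i$. Since adding any multiple of $s-1$ to a branch length leaves this residue unchanged, I have enough freedom to additionally pad the $d_i$ so that the total node count $n$ satisfies $\frac{n-s}{s-1} \in \mathbb{Z}$, forcing $loss_{max} = 0$. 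With this padding, a strategy clears the $B$-section in $t = t_{min}$ steps if and only if it incurs zero loss, which is the property I want to link to exact packing. Because EXACT BIN PACKING is \emph{strongly} NP-complete, the item values are polynomially bounded, so the branch lengths and hence $n$ are polynomial in the input size; this is precisely why the strong version is needed and where ordinary NP-hardness of bin packing would fail.

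The correctness argument then has two directions, both riding on the established sweep/bin correspondence. If the items pack into exactly-full bins, I would realize that packing as the initial sweep, each exactly-full bin being a group of branches whose setup values sum to $s-1$ and are cleared together with no searcher left idle; afterwards each branch has a multiple of $s-1$ nodes remaining and can be finished one branch at a time with all $s$ searchers, ending precisely at each leaf, yielding a zero-loss and hence $t_{min}$-step strategy. Conversely, a $t_{min}$-step strategy must have zero loss (as $loss_{max}=0$), and reading off how many nodes each branch contributes to each non-stalling step recovers a packing of the $x_i$ into exactly-full bins. For membership in NP, note that any strategy on a $B$-section has length $t = \lceil \frac{n-s}{s-1}\rceil + 1$, which is polynomial in $n$, so the sequence of searcher placements is a polynomial-size certificate whose validity and step count are checkable in polynomial time.

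I expect the main obstacle to be nailing down the sweep/bin correspondence rigorously in the converse direction, rather than the forward construction. The difficulty is that a $t_{min}$-step strategy need not present its moves in the tidy ``sweep first, then finish branches'' order used in the forward direction: searchers may interleave partial clearing of several branches in ways that are not obviously a bin packing. I would therefore need a normal-form or exchange argument showing that any zero-loss strategy can be rearranged, without increasing loss, into one whose per-step new-node counts partition the branches into groups each summing to a multiple of $s-1$, so that the induced bins are exactly full. Care is also required at the boundary cases --- the first step, which may visit $s$ rather than $s-1$ new nodes, and branches whose setup value is $0$ or would correspond to an item of size $V$ --- to ensure the encoding is faithful and that the padding cannot create a spurious $t_{min}$ strategy with positive loss.
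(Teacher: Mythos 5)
Your high-level plan matches the paper's: reduce from EXACT BIN PACKING (using its strong NP-completeness so the constructed graph has polynomial size), set $s = V+1$, encode items as branches hanging off a single branching node, and pad so that $loss_{max}=0$, which makes ``cleared in $t_{min}$ steps'' equivalent to ``zero loss.'' Your NP-membership argument and your observation about why the \emph{strong} hardness is needed are both correct and are points the paper leaves implicit.

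However, there is a genuine gap, and it stems from a design choice that diverges from the paper in a way that matters. You encode item $x_i$ as the \emph{residue} of a branch of length $x_i + k_i(s-1)$, distributing the padding into the branches themselves; this produces branches longer than $s-1$, so any zero-loss strategy must partially clear branches, stop mid-branch, and resume later. The entire difficulty of the converse direction --- showing that a zero-loss strategy induces an exact packing of the residues --- then lives in exactly the interleaved, partial-clearance regime, and the ``normal-form or exchange argument'' you defer to is the whole proof; it is not clear it even goes through, since a zero-loss strategy on long branches is only constrained to visit $s-1$ new nodes per step globally and need not stop each branch at a depth congruent to $x_i$ modulo $s-1$ during any identifiable ``sweep phase.'' The paper avoids this entirely: it makes each item branch $\rho_i$ have length exactly $x_i < V = s-1$ and concentrates all the padding into one extra branch $b$ of length at most $V-1$ (reducing from the $p=1$ version of EXACT BIN PACKING, with $b$ playing the role of the single partially full bin). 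With every branch shorter than $s-1$, a zero-loss strategy must keep a guard on $\beta$, may revisit no other node, and hence must fully clear whole branches in each step --- so the branches finished in each step \emph{are} the bins, and the converse is immediate. To repair your proof you should either adopt this short-branch construction or actually supply the exchange argument you identify as missing; as written, the reduction's backward direction is not established.
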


\begin{proof}
Consider an instance of the decision version of EXACT BIN PACKING with items $X = \{x_1, \dots, x_m\}$, bin size $V$, and $p = 1$. We construct an instance of GRAPH SEARCHING by transforming the $x_i$ into paths $\rho_i$ of length $x_i$ and attaching each $\rho_i$ to a distinguished branching node $\beta$. Additionally, we attach a path $b$ of length $\big \lceil \frac{\sum_i x_i- V}{V} \big \rceil V - (\sum_i x_i - V)$ to $\beta$. Call the resulting graph $G$ and notice that $G$ is a $B$-section. Let the GRAPH SEARCHING instance have $s=V+1$ and $t = \big \lceil \frac{|b| + \sum x_i- V}{V} \big \rceil + 1$. Then, the GRAPH SEARCHING instance has a solution iff there is a solution to the instance of EXACT BIN PACKING.

Notice that we have chosen the length of $b$ such that $loss_{max} = 0$, therefore $|b| + \sum x_i- V$ is a multiple of $V$. Also, the chosen $t = t_{min}$. Thus, if the EXACT BIN PACKING problem has a solution then $G$ can be cleared by first placing a searcher at $\beta$ and on every $\rho_i$ of a bin in a sweep across the top of $G$. The clearance of $b$ is included in the final step of the strategy. This search strategy has $loss = 0$ and will be able to clear the graph in $t_{min}$ steps.

In the other direction, given that $G$ can be cleared in $t = t_{min}$ steps we show how to obtain a solution to the EXACT BIN PACKING instance by progressively restricting how such a strategy must behave. Again, the structure of $G$ is such that $loss_{max} = 0$ so the strategy clearing $G$ cannot incur any loss. Thus, we can immediately rule out strategies which split into multiple groups; that is, any strategy in which the subgraph induced from searcher placements in a step does not form a connected component (ignoring the directions on edges) as such a step incurs a minimum loss of one. Then, to clear every branch we must leave a guard on $\beta$ as it is required to clear the first edge in each branch. Thus, since we cannot incur any loss, no node other than $\beta$ can be revisited in any step else the strategy would not visit $s-1$ new nodes. Therefore, we cannot partially clear any branch. Then, since each $\rho_i$ has length less than $V$, the strategy will fully clear some number of branches in every step of the strategy. Now, observe that we have restricted the allowable strategies such that they can only differ from the one described above by a re-ordering of steps. Thus, the $\rho_i$ cleared in each step are placed in a bin and the resulting bins make up the packing. Note, if $b$ had non-zero length we do not include it in the packing and thus get at most one partially full bin.
\end{proof}

From this we get our main result.

\begin{thm}
The GRAPH SEARCHING problem on $B$-sections is NP-complete.
\end{thm}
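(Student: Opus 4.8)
The plan is to derive the theorem as an immediate consequence of the preceding lemma, supplemented by a short argument that the problem lies in NP; no new reduction is required. The preceding lemma already proves hardness for the restricted family of instances in which the target time is pinned to $t = t_{min}$, and the unrestricted GRAPH SEARCHING problem on $B$-sections simply drops that constraint, so it contains the restricted problem as a sub-family of instances. The core of the proof is therefore a restriction (instance-embedding) argument for hardness together with the standard guess-and-verify argument for membership.

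For NP-hardness, I would note that the reduction built in the preceding lemma takes an instance of EXACT BIN PACKING (with $p=1$) and outputs a triple $(G, s, t)$ in which $G$ is a $B$-section and $t = t_{min}$, while preserving the YES/NO answer. Every output of that map is a perfectly legitimate instance of the unrestricted GRAPH SEARCHING problem on $B$-sections, and the answer does not change when we view it as such. Hence the very same polynomial-time map is a reduction from EXACT BIN PACKING, which is strongly NP-hard by the Corollary, into the unrestricted problem, so the latter is NP-hard. (Since the hard instances already use $s = V+1$ and a single branching node, one also sees that hardness persists on the natural structural restriction to $B$-sections.)

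For membership in NP, I would take a search strategy as the certificate, listing only the vertex-placement sets $V_0, \dots, V_\ell$ with $\ell \le t$ as permitted by the remark following Definition 1. Verification checks the budget $|V_i| \le s$ at every step, recomputes each $E_{i+1}$ from $V_{i+1}$ and $E_i$ by applying the clearing/recontamination rule of Definition 1 (a reachability test over unguarded directed paths, computable in polynomial time per step), and confirms $E_\ell = E$. To keep the certificate polynomial, I would observe that a YES instance always admits a witness of length at most the optimal search time, and that search-time$_G(s) \le n$ whenever $s$ is at least the search number of $G$, since one can always advance the frontier by at least one new vertex in each non-initial step; thus the witness length may be capped at $n$ regardless of how large the input $t$ is. This bounds the certificate by $O(n^2)$ and the verification by a polynomial, giving membership and hence NP-completeness. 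The only step needing genuine care is this membership argument, specifically justifying that restricting to witnesses of length $O(n)$ loses no generality even though the decision problem nominally permits arbitrarily large $t$; the hardness direction is routine once the instance-embedding is spelled out.
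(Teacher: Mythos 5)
Your proposal is correct and matches the paper's approach: the paper derives this theorem directly from the preceding lemma (hardness at $t = t_{min}$) with no further argument, exactly the instance-embedding you describe. Your added care about NP membership and bounding the witness length by $O(n)$ is a detail the paper leaves implicit, but it does not change the route.
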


Furthermore, hardness on $B$-sections implies hardness on all its superclasses in the directed setting which includes directed trees, DAGs and all their directed superclasses. Therefore, we see an interesting comparison to computing the search number on undirected graphs where the problem becomes efficiently solvable when we move from general graphs to trees. However, computing the search time does not become efficiently solvable even when restricting the input graph to a $B$-section. In the following section, despite the hardness of the search time problem, we will introduce an efficient approximation algorithm for searching general digraphs.

\section{Our Search Algorithm}

\subsection{Searching Digraphs}

Since the graph searching problem is NP-hard even on B-sections, the task of clearing networks, which are general digraphs, is also NP-hard. We present a method for clearing a general digraph which works in two phases. We first compute a \emph{feedback vertex set (FVS)} for the network and place permanent guards at these nodes. Formally, an FVS is a set of nodes whose removal leaves a graph without cycles. Thus, by doing so, we are left with a DAG which can be cleared by our Plank algorithm given in the next subsection. The procedure for searching general digraphs is outlined in Algorithm \ref{alg:digraph}.

\begin{algorithm}
\caption{Search Digraph}
\begin{algorithmic}
\Require The input graph $G$ 
\Ensure A search strategy $\sigma = (V_1,\dots,V_t)$
\State Compute an FVS for $G$
\State Place permanent guards, $p$, on source nodes of FVS to create a DAG $G'$ that needs to be cleared
\State Run Plank algorithm on $G'$ to compute a search strategy $\sigma = (V_1,\dots,V_t)$
\State \Return $\sigma = (V_1 \cup p,\dots,V_t \cup p)$ 
\end{algorithmic}
\label{alg:digraph}
\end{algorithm}

Now, while the focus of our work is the search time, we can make some optimizations with respect to the FVS required to search a digraph in order to reduce the number of searchers used to clear $G$. First, we utilize a \emph{sliding FVS} which only places searchers on FVS nodes for as long as they are required. We say an FVS node is required when one of its neighbouring nodes is visited for the first time and is no longer required when all its neighbouring edges have been cleared. Thus, as the search strategy moves across the graph we have FVS nodes come online and then go offline when they are no longer required which reduces the total number of searchers used to clear the graph since at any time only a subset of the FVS will be active.

Furthermore, in the special case of social networks, we can leverage knowledge of the structure of the network to our advantage. Finding the minimal FVS is an NP-hard problem and thus we are reduced to using heuristic algorithms for the task. However, as most real world social networks exhibit a power-law degree distribution we know that the hub nodes will often be required in the FVS. This idea was utilized in \cite{Kang2011} to find communities in real world networks and we take a similar approach when considering social networks. In \cite{Kang2011} the \emph{k-hubset} is defined as the set of nodes with the top $k$ highest degrees. For our optimization, we compute the $k$-hubset of $G$ and take its union with the computed FVS to arrive at the set of permanent guards to be used for the sliding FVS. The mentality behind adding the $k$-hubset to the FVS is that any nodes in the $k$-hubset that are not included in the FVS will be visited many times during the clearance of $G$ due to their high connectivity and thus removing them preemptively will reduce the search time. Of course, there remain the questions of what value to choose for $k$ and how the size of the final FVS will be affected which we explore in our experiments. In the following section we present our Plank algorithm for searching a DAG.

\subsection{Plank Algorithm}

Our Plank algorithm works in a depth-first manner with some modifications specific to the graph searching problem. The name comes from a description of how searchers are placed in subsequent steps. Imagine a long plank of wood lying on the ground. We can move this plank by picking up one end until the plank is upright and then letting it drop in the direction we wish to travel. By repeatedly moving in this way we move the plank a distance equal to its length each time. Then, we can think of the plank as $s$ searchers placed adjacently on a graph so that moving the plank corresponds to visiting $s-1$ new nodes.

The Plank is a two-phase algorithm for computing its search strategy for a DAG, $G$. In a pre-processing step the algorithm computes an edge ordering for $G$, $\Psi$, and in the second pre-processing step it compiles a search strategy from $\Psi$. In Algorithm \ref{alg:mDFS} below, $mDFS$ refers to a modified depth-first search designed specifically for the Plank algorithm. Our $mDFS$ operates similarly to the $DFS$ algorithm, but with a special stopping condition: we backtrack if the current vertex has an unexplored incoming edge. This ensures we do not allow any recontamination from uncleared incoming edges as our strategy does not leave stationary guards at vertices. The Plank's high level execution proceeds as follows:

\begin{enumerate}
\item Run $mDFS$ on $G$ to produce an edge ordering $\Psi$
\item Convert $\Psi$ into a search strategy using $s$ searchers
\end{enumerate}

Now we present the Plank's subalgorithms. First, we have the pseudocode for the $mDFS$ algorithm in Algorithm \ref{alg:mDFS}. We assume all nodes in $G$ are initially labelled as unvisited and all edges as unexplored.

\begin{algorithm}
\caption{mDFS}
\begin{algorithmic}
\Require Input graph $G$ and the current node $v$
\Ensure An edge ordering $\Psi$
	\State $\Psi \gets []$
	\If {$v$ has no unexplored incoming edges}
		\State Label $v$ as visited
		\ForAll {edges $e$ in $G.outEdges(v)$}
			\If {edge $e$ is unexplored}
				\State $\Psi.append(e)$
				\State Label $e$ as explored
				\State $w \gets G.adjacentVertex(v, e)$
				\State $\Psi.append(mDFS(G, w))$ 
			\EndIf
		\EndFor
	\EndIf
	\State \Return $\Psi$
\end{algorithmic}
\label{alg:mDFS}
\end{algorithm}

Note, in the case that every node in $G$ is not visited in a call to $mDFS$, we continue re-calling the algorithm passing in an unexplored node until there are no more unexplored nodes in $G$. If there are multiple edge labelings, they are appended together to make a master edge labelling.

Next, we show how to convert the resulting edge labelling, $\Psi$, into a search strategy for $G$ using $s$ searchers (Algorithm \ref{alg:strat_construction}). In summary, $\Psi$ is traversed adding nodes to the current step in the search strategy until a step has reached $s$ placements. After $\Psi$ has been traversed we will have all the steps which make up the Plank search strategy $\sigma$. This procedure is captured in the pseudocode of Algorithm \ref{alg:strat_construction} where $V_c$ represents the nodes present in the current step.

\begin{algorithm}[h]
\caption{Construct Strategy}
\begin{algorithmic}
\Require Sequence $\Psi$ and the number of searchers $s$
\Ensure a search strategy $\sigma = (V_1,\dots,V_t)$
	\State $\sigma, V_c, cleared \gets \emptyset$
	\ForAll {edges $e$ in $\Psi$}
		\If {$e$ not in cleared}
			\If{$nodes(e)$ not in $V_c$}
				\State $V_c \gets V_c \cup nodes(e)$
				\State $cleared \gets cleared \cup e$
			\EndIf
		\EndIf
		\If {current step contains $s$ placements}
			\State update cleared with the edges cleared in the\\ \hspace{10mm}current step
			\State $\sigma.append(V_c)$
			\State $V_c = \emptyset$
		\EndIf
	\EndFor
	\State \Return $\sigma = (V_1,\dots,V_t)$
\end{algorithmic}
\label{alg:strat_construction}
\end{algorithm}

As it turns out, the strategy presented in Figure~\ref{fig:example_strat} is an example of a strategy produced by the Plank algorithm. For the $mDFS$ algorithm initialized at node 1 we get $\Psi =$ [(1,2),\hspace{1pt}(2,4),\hspace{1pt}(3,4),\hspace{1pt}(4,5),\hspace{1pt}(5,8),\hspace{1pt}(4,6),\hspace{1pt}(6,7),\hspace{1pt}(7,8),\hspace{1pt}(7,9)]. Then, the search strategy construction algorithm produces $\sigma = [(1,2,4,3), (4,5,8,6), (6,7,8,9)]$. Note, we can see that the strategy does not move passed node 4 while there are uncleared incoming edges. Similarly, if there were additional nodes below node 8, they would not be visited until the edge from node 7 to node 8 had been cleared.

\section{Analysis}

\subsection{Approximation Bounds}

In this section we will show that the Plank strategy is a $(2 + f_O)$-approximation algorithm for searching DAGs, where $f_O$ is an instance determined parameter, and motivate its performance on typical DAGs.

First we introduce some definitions to be used in the following proofs. We define four types of DAGs referred to as \emph{sections}. We have already seen the definition of $B$-sections in Section 4 (Fig.~\ref{fig:sections}(a)). Second are sections that resemble $B$-sections, except that the direction of each edge is reversed. That is, the structure is the same as a $B$-section, but with all branches directed towards a distinguished root which we refer to as \emph{R-sections} (Fig.~\ref{fig:sections}(b)). Next, we have sections which look like diamonds, or \emph{D-sections} (Fig.~\ref{fig:sections}(c)). These sections have a start node, two or more node disjoint branches, and an end node with branches originating at the start node and ending at the end node. Finally, we have simple directed paths, or \emph{P-sections} (Fig.~\ref{fig:sections}(d)). Note, the blue and red nodes mark the top and bottom nodes of a section respectively.

\begin{figure*}
	\centering
		\includegraphics[scale=0.6]{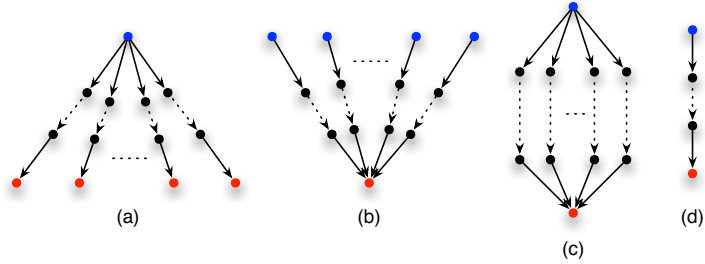}
	\caption{A sample (a) $B$-section (b) $R$-section (c) $D$-section and (d) $P$-section}
	\label{fig:sections}
\end{figure*}

We prove in section \ref{sec:dag_decomp} that any DAG can be decomposed into sections of the above four types and assume this holds for the remainder of the analysis. \\

To begin, we first prove an approximation bound for zero-overlap DAGs and then modify the bound to include the full range of DAGs. We define the \emph{overlap} of a node $v$ by

\[
	overlap(v) =
		\begin{cases}
			r	&	\text{if } v \text{ is a top/bottom node in} \geq 3 \text{ sections} \\
			0	&	\text{else}
		\end{cases}
\]

Where $r$ is the total number of sections for which $v$ is a top or bottom node.

Then, the overlap of a DAG $G$, denoted $\Omega$, is defined as $\Omega = \sum_{u \in V} overlap(u)$. A DAG is said to be a \emph{zero-overlap} DAG if $\Omega = 0$ and indicates a DAG in which each section overlaps with at most one other section. The following analysis assumes a zero-overlap DAG.

We bound the number of steps required by the Plank strategy by bounding the loss measure we introduced in Section 4. To that end, we consider the loss the Plank strategy can achieve when taking an arbitrary step in its clearance. We have three cases for how the strategy moves between steps: (1) the strategy remains within a single section, (2) the strategy finishes clearing a section and moves onto previously unvisited sections, or (3) the strategy finishes clearing a section and returns to a partially cleared section. We investigate these cases in three claims below.

\begin{claim}
A step taken by the Plank strategy described by Case 1 can incur a loss of no more than 2.
\end{claim}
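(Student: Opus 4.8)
The plan is to convert the per-step loss into a count of re-placed searchers and then bound that count from the structure of a single section. First I would record the accounting identity behind the loss measure: every step places exactly $s$ searchers, and an optimal non-initial step visits $s-1$ new nodes, so a step that places $k$ of its searchers on nodes already visited in an earlier step visits only $s-k$ new nodes and hence contributes $(s-1)-(s-k)=k-1$ to the loss. Proving the claim therefore reduces to showing that a step which stays within a single section (the Case~1 situation) re-places searchers on at most three previously visited nodes.

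Second, I would enumerate the possible origins of a re-placed searcher in such a step, reading them off the edge ordering $\Psi$ produced by $mDFS$ (Algorithm~\ref{alg:mDFS}) together with the grouping performed by Algorithm~\ref{alg:strat_construction}. A previously visited node can reappear in the current step in only two ways: (i) as the single overlap node shared with the immediately preceding step, the node that lets the plank advance contiguously, of which there is at most one; or (ii) as a top/bottom (junction) node of the current section, first visited when the section was entered and then re-placed in order to clear a later branch. Every interior node of a branch, by contrast, has a single incoming and single outgoing edge that $mDFS$ lists consecutively, so it is placed as a new node when its incoming edge is first cleared and can only recur as the overlap node of the following step; it therefore produces no re-placements beyond case (i).

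Finally, I would bound the number of junction re-placements by the number of shared top/bottom nodes of the section and check each of the four section types. A $P$-section has no junction, a $B$-section and an $R$-section each have exactly one (the branching node, respectively the root), and a $D$-section has two (its start node and its end node); in every case each such node is added to the current step at most once, since once a node lies in $V_c$ Algorithm~\ref{alg:strat_construction} never re-adds it. Hence a single-section step re-places at most $1$ (overlap) $+\,2$ (junctions) $=3$ searchers, and by the identity above its loss is at most $2$. I expect the $D$-section to be the only delicate case: one must confirm that a single step really can re-place searchers on both the start and the end node at once --- this occurs exactly when the step finishes one short branch and opens the next --- and that doing so is legal, i.e.\ that abandoning the already-cleared branches causes no recontamination. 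The latter holds because distinct branches of a $D$-section share only their two endpoints and the section is acyclic, so no contaminated branch has a directed path into a cleared one. This same configuration shows the bound of $2$ is attained, so the claim is tight.
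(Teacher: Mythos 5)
Your proposal is correct and follows essentially the same route as the paper's proof: the loss of a single-section step is traced to re-placements on the section's top/bottom nodes when moving between branches, with $D$-sections (two junction nodes) giving the worst case of $2$ and $B$/$R$-sections giving at most $1$. Your version simply makes the bookkeeping explicit --- the identity $loss = k-1$ for $k$ re-placed searchers and the observation that interior branch nodes can only recur as the single pivot of the next step --- which the paper leaves implicit.
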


\begin{proof}
Recall that the Plank strategy will move across branches of a section one at a time. Thus, when moving between branches in a $B$/$R$-section the strategy will revisit the top/bottom node of the section. Thus, the strategy will incur a loss if the previous branch was not cleared in a single step. On the other hand, when moving between branches of a $D$-section the Plank strategy will revisit both the top and bottom nodes incurring a loss of two if the previous branch was only partially cleared. Finally, a $P$-section trivially cannot incur a loss.
\end{proof}

\begin{claim}
A step taken by the Plank strategy described by Case 2 can incur a loss of no more than 2.
\end{claim}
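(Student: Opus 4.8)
The plan is to reduce this case to the analysis already carried out for Case~1. A Case-2 step is one in which the plank completes the section it is currently clearing, call it $A$, and then descends (in the $mDFS$ order) into a previously unvisited section $B$. I would first record the structural fact that makes the reduction work: because we assume a zero-overlap DAG, $A$ and $B$ can share at most one vertex, and that shared vertex $x$ is necessarily a top/bottom node of both sections. Moreover $x$ is precisely the vertex at which the plank arrives when it finishes $A$, since the $mDFS$ only moves forward into $B$ through this connection node. Thus the only vertex the plank must ``reuse'' in order to open $B$ is $x$, and $x$ coincides with the exit node of $A$.

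Next I would account for the loss in two pieces: the loss incurred in finishing the last branch of $A$, and the loss incurred in entering $B$. For the first piece, finishing the last branch of $A$ is governed by exactly the reasoning of the previous claim (Case~1): revisiting the branching node of a $B$- or $R$-section contributes a loss of at most one, a $D$-section forces a revisit of both its top and its bottom node for a loss of two, and a $P$-section contributes none. Hence this piece contributes at most $2$. For the second piece, I would argue that entering $B$ costs nothing extra: since the plank is already positioned on $x$ upon completing $A$, it simply extends contiguously into $B$ with $x$ as its pivot, so $x$ is the only reused vertex and it has already been counted as the exit node of $A$. No additional previously-cleared vertex is revisited as the new section begins. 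Combining the two pieces yields a total loss of at most $2$ for any Case-2 step, with the worst case realized when $A$ is a $D$-section whose final branch is short enough to be cleared within this single step, forcing revisits of both $A$'s top and bottom before the plank slides on into $B$.

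The part I expect to be the main obstacle is ruling out a \emph{third} revisit when the new section is opened. For this I must lean on the zero-overlap hypothesis to guarantee a single connection vertex, so that the plank never has to jump back to two distinct anchors merely to open $B$, and on the $mDFS$ property to guarantee that only one backtrack occurs within the step. In particular, even if the step happens to run contiguously through several short unvisited sections in a row, their junctions are traversed as ordinary plank pivots rather than as jump-backs, so they contribute no loss; the only loss-bearing revisits are the at most two endpoints needed to finish $A$. Making this ``no second backtrack'' argument precise, and confirming that the shared node $x$ genuinely serves as the pivot rather than as an extra anchor, is the crux of the proof.
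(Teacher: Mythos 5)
Your argument rests on the same key observation as the paper's proof: in a Case~2 step every node of the newly entered section is visited for the first time except the single connection node, so the only loss comes from revisiting the (at most two) endpoints of the section being finished, with the worst case of $2$ realized by a $D$-section. The paper attributes the two units slightly differently (one for revisiting the bottom node to finish the $D$-section, one for entering a new section overlapping its top node, versus your charging both to finishing the last branch), but this is immaterial bookkeeping and the approach is essentially identical.
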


\begin{proof}
When moving to a new section, besides the nodes connecting sections, every node is being visited for the first time. Thus, we again have a worst case loss of 2, in the situation where we finish clearing a $D$-section $\theta$ and move onto clearing a section which overlaps with the top node of $\theta$. In contrast, moving to a downstream section can only incur a worst case loss of 1, when the bottom node of $\theta$ is revisited, as every other node is visited for the first time.
\end{proof}

\begin{claim}
A step taken by the Plank strategy described by Case 3 can incur a loss of no more than $\lceil \frac{s}{2} -1 \rceil$.
\end{claim}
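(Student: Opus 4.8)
The plan is to analyze a single step of the Plank strategy that re-enters a section $A$ which was left partially cleared, and to bound the number of already-visited nodes that such a step is forced to re-place. Recall that for any step after the first the loss equals $(s-1)$ minus the step's progress. So if the step places its $s$ searchers on a connected configuration consisting of $k$ already-visited (``old'') nodes and $s-k$ previously-unvisited (``new'') nodes, its progress is $s-k$ and its loss is exactly $k-1$. It therefore suffices to show that a return step never re-places more than $\lceil s/2 \rceil$ old nodes, which yields $loss \le \lceil s/2 \rceil - 1 = \lceil \tfrac{s}{2}-1 \rceil$.

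First I would fix the configuration at the instant the strategy re-enters $A$. Since the Plank follows the $mDFS$ edge ordering $\Psi$, leaving $A$ corresponds to a backtrack in the depth-first order and returning corresponds to resuming $\Psi$ at the frontier node $f$ where $A$'s clearance was suspended. I would argue the returning step must (i) re-place a searcher on $f$ (and on the portion of $A$ between the re-entry junction and $f$) so that the next edge out of $f$ can be cleared without recontamination, and (ii) advance the remaining searchers into the unvisited continuation of $A$. Part (i) accounts for the old nodes and part (ii) for the new nodes, so the whole analysis reduces to bounding the reconnecting length in (i).

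The crux is to show the old portion is at most $\lceil s/2 \rceil$. The idea I would use is an exchange/splitting argument: if reaching $f$ forced the step to lay more than half of its $s$ searchers on old nodes, then fewer than $\lfloor s/2 \rfloor$ searchers would remain to advance, and I would show the strategy can always split such a move so that the step actually charged as the ``return'' keeps at least half the plank on new nodes; that is, the reconnecting portion need never exceed the advancing portion. I would make this precise by a case analysis over the four section types, using two facts already available: the sealed-region property guaranteed by $mDFS$ (a node is entered only once all of its incoming edges are cleared), so that abandoning $A$ to clear another section does not recontaminate $A$'s cleared prefix and thus does not force re-guarding more than the stretch back to $f$; and the zero-overlap assumption, under which $A$ meets its neighbour in a single shared top/bottom node, bounding the distance from the re-entry junction back to $f$ by the length of the still-uncleared remainder of $A$. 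Balancing the re-placed old nodes against this remainder then caps the old count at $\lceil s/2 \rceil$.

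I expect the main obstacle to be exactly this cap: ruling out a ``deep'' return in which reaching the suspended frontier consumes more than half the plank. Handling it cleanly will require showing that the worst split $k = \lceil s/2 \rceil$ is attained only by a balanced old/new configuration, and that any more lopsided return can be reorganized into a forward-progressing step plus a step already accounted for under Case 1 or Case 2, so that no single step charged to Case 3 exceeds the stated $\lceil \tfrac{s}{2}-1 \rceil$ loss.
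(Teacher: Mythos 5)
Your accounting identity---the loss of a step equals the number of already-visited nodes it re-places, minus one---matches the paper's, and the target of showing at most $\lceil s/2 \rceil$ old nodes per step is the right one. But the mechanism you propose for how those old nodes accumulate is not the one that produces the bound, and the step you yourself flag as the crux is both unproved and aimed at a non-problem. In this model a cleared prefix of a branch cannot be recontaminated by the contaminated remainder of that branch: recontamination travels only along directed paths out of contaminated edges, and the branches of a $B$-, $R$-, or $D$-section are directed paths, so the uncleared suffix lies strictly downstream of the cleared prefix. On returning to a suspended branch the strategy therefore need only re-occupy a single old node (the branching node, or the frontier), not the whole stretch from the re-entry junction back to the frontier. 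There is no ``reconnecting length'' to bound, and the exchange/splitting argument you defer to is never actually carried out.

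The genuine worst case, which your proposal misses because you fix a single re-entered section $A$, is that one step can chain through several partially cleared sections: it finishes the remainder of one $B$-section, the leftover searchers return to another partially cleared $B$-section and finish it, then return to yet another, and so on within the same step. Each such return re-places exactly one old node (the branching node of the revisited section) and, since a revisited section still has at least one unvisited node, also consumes at least one searcher on a new node; hence at most roughly $s/2$ returns fit into a single step, each contributing $1$ to the loss, which is exactly where $\lceil \frac{s}{2}-1 \rceil$ comes from. (The paper also notes that $D$-sections cannot be left partially cleared, so only $B$- and $R$-sections participate in Case 3.) Without this chaining observation your argument cannot reach the claimed bound, and with it the ``deep return'' obstacle you worry about simply does not arise.
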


\begin{proof}
When returning to a $B$-section $\theta_B$, we incur a loss of 1 by returning to the branching node. Then, consider the case where $\theta_B$ is entirely cleared with the available searchers and the strategy must again move to a new section or return to another $B$-section. Here, moving to a new section would incur no extra loss as the new section would be downstream from $\theta_B$. However, the strategy could continue clearing $B$-sections and returning up to more partially cleared $B$-sections incurring a loss each time this occurs. The number of times the strategy could return to a $B$-section is bounded by the number of searchers available, $s$, and the minimum size of the portion of the $B$-section left to be cleared, $2$. Thus, the Plank strategy could take a single step which incurs a loss of $\lceil \frac{s}{2} -1 \rceil$ as the sections which are revisited must have at least one node other than the branching node not yet visited. An analogous situation occurs when returning to an $R$-section. Note, $D$-sections cannot be partially cleared and thus do not come up in Case 3 steps.
\end{proof}

Now, we can divide an arbitrary Plank strategy into steps adhering to Case 1, 2, or 3. Thus, w.l.g. we can investigate the approximation ratios for steps of each type to arrive at an overall approximation ratio. We group Case 1 and 2 steps together as \emph{Type 1} steps while Case 3 steps are referred to as \emph{Type 2} steps.

\begin{lem}
The Type 1 steps have an approximation ratio of no more than $2$.
\label{lem:approx1}
\end{lem}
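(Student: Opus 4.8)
The plan is to convert the per-step loss bounds of Claims 1 and 2 into a bound on the length of the strategy, using the identity relating loss to search time stated at the start of this section. Recall that any strategy satisfies $t = \big\lceil \frac{n-s+loss}{s-1}\big\rceil + 1$, while the optimum is bounded below by $t_{min} = \big\lceil \frac{n-s}{s-1}\big\rceil + 1$. Hence the extra steps the Plank strategy spends beyond optimal are governed entirely by the accumulated loss, and it suffices to bound the loss contributed by the Type 1 steps against the useful work they perform. First I would partition $loss$ into $L_1$ (from Type 1 steps) and $L_2$ (from Type 2 steps), so that $t \le \big\lceil \frac{n-s+L_1+L_2}{s-1}\big\rceil + 1$, and isolate the contribution of $L_1$ to the overall ratio.

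The heart of the argument is a per-step comparison between loss and progress. By Claims 1 and 2, every Type 1 step incurs loss at most $2$, while such a step visits $s-1-\ell \geq s-3$ new nodes, where $\ell \le 2$ denotes its loss. Thus, provided the plank is long enough that $s-3 \ge 2$, each Type 1 step makes at least as much progress as the loss it incurs. Summing over all Type 1 steps, the total Type 1 loss $L_1$ is at most the total progress $P_1$ they make, which is at most $n$. Since each step advances the plank by $s-1$ slots split between progress and loss, the number $a$ of Type 1 steps satisfies $a(s-1) = P_1 + L_1 \le 2P_1$; comparing $a$ against the $\lceil P_1/(s-1)\rceil$ steps an optimal strategy would use to cover the same $P_1$ nodes yields a ratio of at most $2$.

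Finally I would reconcile this per-step accounting with the global quantity $t/t_{min}$, absorbing the special first step (which may visit $s$ rather than $s-1$ new nodes) and the two ceiling operations, each of which perturbs the count by only an additive constant and so does not affect the multiplicative factor.

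The step I expect to be the main obstacle is the per-step inequality $\ell \le s-1-\ell$, i.e.\ loss at most progress. It holds comfortably once $s \ge 5$, but the worst case of a $D$-section transition (loss exactly $2$ against only $s-3$ new nodes) makes the bound tight and can fail for small $s$. The delicate point is therefore either to restrict to the regime of interest where $s$ is large enough, or to amortize the rare maximal-loss transitions against the many zero-loss steps occurring within the long branches, so that the factor $2$ survives uniformly. Controlling the ceilings and the first-step correction without inflating the constant is a secondary and more routine nuisance.
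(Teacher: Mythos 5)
Your overall strategy is the paper's: bound the total Type 1 loss as a fraction of the nodes those steps visit, then feed the loss into $t = \big\lceil \frac{n-s+loss}{s-1}\big\rceil + 1$ and compare against the lower bound $\big\lceil \frac{n-s}{s-1}\big\rceil + 1$. For $s \ge 5$ your per-step accounting (loss at most $2$ against at least $s-3$ new nodes, plus the loss-free first step) is exactly what the paper does, leading to $loss \le \frac{2(n-5)}{s-3}$ and a ratio of $\frac{ns-n+s^2-5s-4}{(n-1)(s-3)} \le 2$; you would still need to dispose of the large-$s$ regime, which the paper handles by noting that for $s > \frac{n+3}{2}$ the whole graph is cleared in two steps anyway, so the ceiling and first-step corrections you defer to the end are only safe once that regime is excluded.

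The genuine gap is the case $s=4$, which you correctly identify as the obstacle but leave open with an ``either/or.'' Neither of your two escape routes works as stated: you cannot restrict to large $s$ (the lemma is claimed for any number of searchers), and a generic amortization against ``the many zero-loss steps occurring within the long branches'' is not available, since nothing in the setup guarantees such steps exist among the steps charged to Type 1. The paper closes this case with a structural observation specific to $D$-sections: with $4$ searchers a $D$-section cannot enter a repeating pattern of loss-$2$ steps, so a loss of $2$ arises only when clearing the second branch of a $D$-section, and each such event is therefore accompanied by at least $s+1=5$ nodes. This gives $loss \le \frac{2n}{5}$, hence a ratio of $\frac{7n+5s-10}{5(n-1)} \le 2$ for $n \ge 7$, with the remaining cases $n \in \{5,6\}$ checked by hand. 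Without an argument of this kind your factor $2$ does not survive at $s=4$: a worst-case step there loses $2$ while gaining only $s-3=1$ new node, so your key inequality $L_1 \le P_1$ fails.
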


\begin{proof}
Given $s$ searchers, consider $k$ steps incurring a loss of 2. Then, the total number of nodes from Type 1 steps, $n$, is at least $k(s+1)$ for $s=4$ and $(k-1)(s-3) + s + 2$ for $s \ge 5$. The expression for $s=4$ comes from the fact that a $D$-section being cleared with $4$ searchers cannot enter into a pattern which incurs a loss of $2$ for multiple branches, instead they can only incur a loss of $2$ when clearing the second branch of a $D$-section. The case of $s=4$ is captured in Fig. \ref{fig:approx} (a). For the $s \ge 5$ expression, we visit $s-3$ new nodes in each step except the last step where we may run out of nodes left to visit in which case $2$ additional nodes is a minimum. Notice, the additional $s$ comes from the fact that all strategies visit $s$ nodes in the first step and incur no loss. A simple example for $s=5$ is presented in Fig. \ref{fig:approx} (b).

\begin{figure}
	\centering
		\includegraphics[scale=0.7]{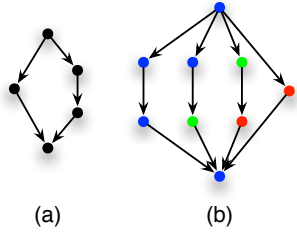}
	\caption{The (a) $s=4$ case and (b) an example for $s=5$. Notice, in (a) any additional branches in the $D$-section would not incur a loss of $2$ given $4$ searchers. In (b), we have $s=5$ and $k=2$. The blue nodes are the initial $s$ nodes which any strategy can visit without loss in the first step which sets up subsequent steps incurring a loss of $2$. The green nodes are the $s-3$ new nodes visited in each step while the red nodes show how the last step only requires $2$ additional nodes to incur a loss of $2$.}
	\label{fig:approx}
\end{figure}

Then, the loss is bounded by $2k$ or $\frac{2n}{5}$ for $s=4$ and $\frac{2(n-5)}{s-3}$ for $s \ge 5$. Now, we can compute an approximation ratio by comparing the lower bound $\lceil \frac{n-s}{s-1} \rceil + 1$ to the expression $\lceil \frac{n-s + loss}{s-1} \rceil + 1$. First, in the case for $s=4$ we have,

\begin{equation}
\Big \lceil \frac{n-s + \frac{2n}{5}}{s-1} \Big \rceil + 1 \le \frac{n-s + \frac{2n}{5}}{s-1} + 2 = \frac{7n + 5s - 10}{5(s-1)}
\end{equation}

\noindent And

\begin{equation}
\Big \lceil \frac{n-s}{s-1} \Big \rceil + 1 \ge \frac{n-s}{s-1} + 1 = \frac{n-1}{s-1}
\end{equation}

\noindent Then

\begin{equation} \label{eq:bound1}
\frac{\big \lceil \frac{n-s + \frac{2n}{5}}{s-1} \big \rceil + 1}{\big \lceil \frac{n-s}{s-1} \big \rceil + 1} \le \frac{\frac{7n + 5s - 10}{5(s-1)}}{\frac{n-1}{s-1}} = \frac{7n + 5s - 10}{5(n-1)}
\end{equation}

\noindent Where \eqref{eq:bound1} is bounded above by $2$ for $n \ge 7$. Then, it is easy to verify by hand that for all DAGs with $5$ or $6$ nodes the Plank strategy requires no more than $3$ steps while the lower bound requires $2$ steps.

Second, in the case for $s \ge 5$ we have,

\begin{equation}
\Big \lceil \frac{n-s + \frac{2(n-5)}{s-3}}{s-1} \Big \rceil + 1 \le \frac{ns - n + s^2 -5s - 4}{(s-3)(s-1)}
\end{equation}

\noindent Then

\begin{equation} \label{eq:bound2}
\frac{\big \lceil \frac{n-s + \frac{2(n-5)}{s-3}}{s-1} \big \rceil + 1}{\big \lceil \frac{n-s}{s-1} \big \rceil + 1} \le \frac{\frac{ns - n + s^2 -5s - 4}{(s-3)(s-1)}}{\frac{n-1}{s-1}} = \frac{ns - n + s^2 -5s - 4}{(n-1)(s-3)}
\end{equation}

\noindent Where \eqref{eq:bound2} is bounded above by $2$ for $s \le \frac{n+3}{2}$. Note, we only consider the case where $s \le \frac{n+3}{2}$ since when $s > \frac{n+3}{2}$ all $n$ nodes will be cleared in $2$ steps as no more than $3$ nodes will remain stationary between steps.

Therefore, an arbitrary number of Type 1 steps has an approximation ratio of no more than $2$.
\end{proof}

\begin{lem}
The Type 2 steps have an approximation ratio of no more than $2$.
\label{lem:approx2}
\end{lem}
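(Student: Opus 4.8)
The plan is to mirror the structure of the proof of Lemma~\ref{lem:approx1}: for an arbitrary collection of $k$ Type~2 (Case~3) steps I would bound the total loss they incur, express that loss in terms of the number of nodes $n$ they cover, and then substitute into the ratio of the number of steps Plank actually uses, $\lceil \frac{n-s+loss}{s-1}\rceil + 1$ (via the loss formula), against the lower bound $\lceil \frac{n-s}{s-1}\rceil + 1$. By Claim~3 a single Case~3 step incurs loss at most $\ell := \lceil \frac{s}{2}-1\rceil$, so $k$ such steps contribute a total loss $L$ of at most $k\ell$.

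First I would pin down the node count. Since loss counts redundant searcher placements, a step with $s$ placements and loss $\ell_i$ visits exactly $s-\ell_i$ new nodes; together with the $s$ nodes that any strategy clears in its first step with no loss, the $k$ Type~2 steps cover $n = s + \sum_i (s-\ell_i) = s + ks - L$ nodes. This yields the clean identity $n - s + L = ks$, so the numerator of Plank's step count, $\lceil \frac{ks}{s-1}\rceil + 1$, is independent of how the loss is distributed, whereas the lower bound's numerator is $n-s = ks-L$. Consequently the ratio is increasing in $L$, and the worst case is realised when every step attains the maximal loss $\ell = \lceil \frac{s}{2}-1\rceil$ that Claim~3 certifies is achievable.

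With $L = k\ell$ and $s - \ell = \lfloor \frac{s}{2}\rfloor + 1$, I would then bound the approximation ratio by
\[
\frac{\lceil \frac{ks}{s-1}\rceil + 1}{\lceil \frac{k(s-\ell)}{s-1}\rceil + 1} \le \frac{\frac{ks}{s-1} + 2}{\frac{k(\lfloor s/2\rfloor + 1)}{s-1} + 1},
\]
which tends to $\frac{s}{\lfloor s/2\rfloor + 1} \le \frac{2s}{s+1} < 2$ as $k$ grows, exactly paralleling the asymptotic computation in Lemma~\ref{lem:approx1}. It then remains to confirm the inequality for all admissible $k$ and $s$; as in the Type~1 argument this reduces to discharging the ceiling functions and checking the handful of small cases (the smallest $s$ for which Case~3 can occur, and $k=1$) by hand.

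The main obstacle will be justifying the node lower bound rather than the arithmetic, which is routine. Achieving loss $\ell = \lceil \frac{s}{2}-1\rceil$ in a single step requires returning to roughly $\ell$ distinct partially cleared $B$/$R$-sections, and I must argue --- using the part of Claim~3 stating that each revisited section still has at least one node other than its branching node left to clear --- that these returns genuinely account for $s-\ell$ fresh nodes in the step, so that no double counting inflates the loss relative to the nodes actually present in the DAG. Once that accounting is secured, the identity $n - s + L = ks$ does the remaining work and the bound of $2$ follows.
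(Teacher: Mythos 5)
Your route is genuinely different from the paper's. The paper bounds the \emph{aggregate} loss of the Type 2 steps crudely: it uses the fact that every step visits at least one new node to get $k \le n-s+1$, multiplies by the per-step loss cap $\lceil \frac{s}{2}-1\rceil$ from Claim 3, argues the total loss cannot exceed $\frac{n}{2}$, substitutes that into $\lceil\frac{n-s+loss}{s-1}\rceil+1$, and compares with the lower bound, handling the regime $s>\frac{n-1}{2}$ by a separate remark. You instead set up an exact bookkeeping identity among the nodes covered, the number of steps, and the loss, observe that the ratio is monotone in the loss, and push every step to the maximal loss; this is tighter and avoids the paper's rather delicate intermediate inequality chain. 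Both arguments share the same unproved framing assumption that the Type 2 steps can be analyzed on ``their own'' $n$ nodes separately from the Type 1 steps.

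However, there is an off-by-one in your accounting that you should fix. Under this paper's loss convention the reference point is $s-1$ new nodes per step (that is what makes $t=\lceil\frac{n-s+loss}{s-1}\rceil+1$ exact), so a step with loss $\ell_i$ visits $s-1-\ell_i$ new nodes, not $s-\ell_i$. Indeed, your identity $n-s+L=ks$ would force the step count to be $\lceil\frac{ks}{s-1}\rceil+1$, which contradicts the plain fact that the strategy takes exactly $k+1$ steps. The corrected identity is $n-s+L=k(s-1)$, the numerator is simply $k+1$, and with $\ell=\lceil\frac{s}{2}\rceil-1$ so that $s-1-\ell=\lfloor s/2\rfloor$, the worst-case ratio is
\[
\frac{k+1}{\bigl\lceil k-\tfrac{k\ell}{s-1}\bigr\rceil+1}
\;\le\;
\frac{k+1}{k\,\tfrac{\lfloor s/2\rfloor}{s-1}+1}\;\le\;\frac{k+1}{\tfrac{k}{2}+1}\;\le\;2,
\]
which holds for \emph{every} $k$ since $\frac{\lfloor s/2\rfloor}{s-1}\ge\frac12$; the small-case check you defer therefore disappears. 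With this repair, and with the per-step fresh-node count justified as you indicate, your argument goes through and gives the same conclusion as the paper, arguably more cleanly.
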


\begin{proof}
Given $s$ searchers, consider $k$ steps incurring a loss of $\lceil \frac{s}{2} -1 \rceil$. Then, notice that the upper bound on the number of steps required by a strategy on zero-overlap DAGs is $n - s + 1$ as we visit at least one new node in each step. Thus, $k \le n - s + 1$ giving a loss bounded above by $k \lceil \frac{s}{2} -1 \rceil \le \frac{ns - s^2 + s}{2}$ since $\lceil \frac{s}{2} -1 \rceil \le \frac{s}{2}$. Now, notice that $\frac{ns - s^2 + s}{2} \le \frac{n}{2}$ for $s \le n$ which holds for all search strategies. Therefore, we can compute the approximation ratio as,

\begin{equation}
\Big \lceil \frac{n-s + \frac{n}{2}}{s-1} \Big \rceil + 1 \le \frac{n-s + \frac{n}{2}}{s-1} + 2 = \frac{3n + 2s - 4}{2(s-1)}
\end{equation}

\noindent Then

\begin{equation} \label{eq:bound3}
\frac{\big \lceil \frac{n-s + \frac{n}{2}}{s-1} \big \rceil + 1}{\big \lceil \frac{n-s}{s-1} \big \rceil + 1} \le \frac{\frac{3n + 2s - 4}{2(s-1)}}{\frac{n-1}{s-1}} = \frac{3n + 2s - 4}{2(n-1)}
\end{equation}

\noindent Where \eqref{eq:bound3} is bounded above by $2$ for $s \le \frac{n-1}{2}$. Note, we only consider the case where $s \le \frac{n-1}{2}$ since when $s > \frac{n-1}{2}$ the number of nodes remaining after the first step is less than $\frac{n}{2}$ and therefore the loss cannot exceed this value.

Therefore, an arbitrary number of Type 2 steps has an approximation ratio of no more than $2$.
\end{proof}

Thus, we get the following approximation bounds for the Plank strategy on zero-overlap DAGs.

\begin{lem}
The Plank algorithm is a $2$-approximation algorithm for computing the search time of a zero-overlap DAG.
\end{lem}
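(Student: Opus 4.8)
The plan is to combine the two preceding lemmas, which bound the approximation ratio of the Type 1 steps and of the Type 2 steps separately, into a single bound for an arbitrary Plank strategy. The starting observation is that the steps of any Plank strategy on a zero-overlap DAG partition into Type 1 steps (Cases 1 and 2) and Type 2 steps (Case 3), and this induces a partition of the $n$ nodes into the $n_1$ nodes first visited during Type 1 steps and the $n_2$ nodes first visited during Type 2 steps, with $n_1 + n_2 = n$. The total loss is correspondingly $loss = loss_1 + loss_2$, where Lemma \ref{lem:approx1} bounds $loss_1$ by an affine function of $n_1$ and Lemma \ref{lem:approx2} bounds $loss_2$ by an affine function of $n_2$.

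First I would write the Plank step count as $t = \lceil \frac{n - s + loss}{s-1} \rceil + 1$ and, using the relaxation $\lceil x \rceil \le x + 1$ already employed in the two sub-lemmas, bound it above by an expression that is affine in $loss$. Substituting the affine bounds on $loss_1$ and $loss_2$ and writing the split as $n_1 = \alpha n$, $n_2 = (1-\alpha)n$ for $\alpha \in [0,1]$, the resulting upper bound on $t$ becomes an affine function of $\alpha$. The lower bound $t_{min} = \lceil \frac{n-s}{s-1} \rceil + 1$ depends only on the fixed total $n$ and not on the split, so the whole (relaxed) ratio $t / t_{min}$ is an affine function of $\alpha$ on the interval $[0,1]$.

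The key step is then the elementary fact that an affine function on a closed interval attains its maximum at an endpoint. The endpoint $\alpha = 1$ corresponds to a strategy whose loss is entirely of Type 1, which is precisely the situation bounded by Lemma \ref{lem:approx1}, giving ratio at most $2$; the endpoint $\alpha = 0$ corresponds to an all-Type-2 loss, bounded by Lemma \ref{lem:approx2}, again giving ratio at most $2$. Since both endpoints are at most $2$ and the ratio is affine in between, it is at most $2$ for every admissible split $\alpha$, and hence for an arbitrary Plank strategy.

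Finally I would dispose of the boundary assumptions inherited from the two lemmas. Those lemmas restrict attention to $s \le \frac{n+3}{2}$ and $s \le \frac{n-1}{2}$ respectively, with the complementary large-$s$ regime and the small-$n$ instances verified directly; in the large-$s$ regime the graph clears in $2$ steps, trivially within a factor of $2$ of the lower bound, so these cases cause no difficulty. The main obstacle I anticipate is not the arithmetic but making the node partition airtight: the nodes shared on the boundary between a Type 1 region and an adjacent Type 2 region (the branching, top, and bottom nodes that get revisited) must be accounted for consistently, so that $loss_1$ and $loss_2$ genuinely add and the endpoints $\alpha = 0, 1$ reduce exactly to the two sub-lemmas without double-counting the free $s$ nodes of the first step.
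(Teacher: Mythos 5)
Your proposal follows essentially the same route as the paper: the paper's own proof is a one-liner observing that every step of a Plank strategy is of Type 1 or Type 2 and that the bound then ``follows directly'' from Lemmas \ref{lem:approx1} and \ref{lem:approx2}. Your explicit node partition, the additivity of the loss, and the affine-in-$\alpha$ interpolation argument (maximized at the endpoints, each of which reduces to one of the sub-lemmas) simply make rigorous the gluing step that the paper leaves implicit, and the boundary-node double-counting issue you flag is a genuine gap in the paper's terse argument that your version correctly identifies and would need to resolve.
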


\begin{proof}
We consider an arbitrary instance of a Plank strategy. The steps of the strategy are all of Type 1 or 2. Then, the proof follows directly from Lemma's \ref{lem:approx1} and \ref{lem:approx2}.
\end{proof}

In practice, the number of searchers will often be much less than the size of the DAG, $s \ll n$, in which case $\eqref{eq:bound1} \approx \frac{7}{5} + O(\frac{s}{n})$, $\eqref{eq:bound2} \approx 1 + O(\frac{2}{s}) + O(\frac{s}{n})$, and $\eqref{eq:bound3} \approx \frac{3}{2} + O(\frac{s}{n})$. Furthermore, the structure of a DAG required to produce an approximation ratio for \eqref{eq:bound3} of $\frac{3}{2} + O(\frac{s}{n})$ is extremely artificial and would not show up in a large fraction of DAGs and \eqref{eq:bound1} only applies when $s=4$. In general, we expect the approximation ratio to closely resemble $1 + O(\frac{s}{n})$. Therefore, proving the usefulness of the Plank algorithm for typical zero-overlap DAGs. \\

Now, we must modify the bound for DAGs with nonzero overlap. The overlap of a DAG can be viewed as rough estimation of the density of the digraph. As such, DAGs move progressively towards resembling directed complete bipartite graphs (with all edges directed from one partition to the other) as the overlap increases. We take a conservative route and add to the bound of $2$ for zero-overlap DAGs an \emph{overlap factor}, $f_o$. The $f_o$ factor upper bounds the number of steps required to clear the number of possible edges incident on the overlapping nodes. It is defined as,

\begin{equation}
f_o = \Big ( \frac{\Omega}{n-1} \Big )
\end{equation}

and can often be approximated by $\frac{m}{n}$. Thus, combining the possible loss in zero-overlap DAGs and the potential loss in DAGs with overlap yields an approximation ratio that holds for all DAGs of $2 + f_o$.

\begin{thm}
The Plank algorithm is a $(2+f_o)$\hyp{}approximation algorithm for computing the search time of a DAG.
\end{thm}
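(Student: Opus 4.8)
The theorem asserts that the Plank algorithm is a $(2+f_o)$-approximation for search time on arbitrary DAGs. My plan is to build directly on the zero-overlap result already established, treating the overlap factor $f_o$ as an additive correction term. The previous lemma shows that for zero-overlap DAGs the approximation ratio is at most $2$, arising from bounding the total loss incurred by Type~1 and Type~2 steps. The essential new ingredient for the general case is to account for the additional loss that can accumulate at \emph{overlapping} nodes, i.e.\ nodes that serve as top or bottom endpoints of three or more sections, since these are precisely the nodes not handled by the zero-overlap analysis.

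\medskip
\noindent\textbf{Key steps.} First, I would decompose an arbitrary DAG $G$ into sections of the four types (invoking the decomposition theorem promised in Section~\ref{sec:dag_decomp}). Second, I would partition the nodes of $G$ into those whose $overlap$ value is $0$ and those whose $overlap$ value is $r\ge 3$. For the zero-overlap portion of the clearance, the two earlier approximation lemmas already certify a contribution to the step count bounded by $2$ times the lower bound. Third, I would isolate the \emph{extra} loss attributable to revisiting overlapping nodes: each time the Plank strategy transitions between sections sharing a high-overlap endpoint, it may be forced to revisit that endpoint, and the number of such revisits is controlled by how many sections meet at the node. Summing these revisits over all overlapping nodes gives a total extra loss bounded in terms of $\Omega = \sum_{u\in V} overlap(u)$. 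Fourth, I would feed this extra loss into the same loss-to-steps conversion $t = \lceil \frac{n-s+loss}{s-1}\rceil + 1$ used throughout, and show that dividing by the lower bound $\lceil \frac{n-s}{s-1}\rceil + 1$ contributes an additional term of at most $f_o = \Omega/(n-1)$ to the ratio, yielding the combined bound $2 + f_o$.

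\medskip
\noindent\textbf{Main obstacle.} The hard part will be rigorously bounding the extra loss by $\Omega$ and justifying that the correction to the ratio is exactly $f_o = \Omega/(n-1)$ rather than something larger. Concretely, one must argue that the loss contributed by a node of overlap $r$ is proportional to $r$ (so that the per-node contributions sum to $\Omega$), and then verify that when this loss is added to the numerator $n-s+loss$ and the quotient against $\frac{n-1}{s-1}$ is formed, the overlap term separates cleanly as an additive $\frac{\Omega}{n-1}$. The paper's phrasing ``we take a conservative route'' and ``upper bounds the number of steps required to clear the number of possible edges incident on the overlapping nodes'' suggests the intended argument is a loose counting bound rather than a tight analysis, so I would aim to make that counting explicit: bound the edges incident on overlapping nodes, bound the steps needed to clear them by $\Omega$-many additional loss-inducing placements, and confirm the additive separation in the final ratio computation.
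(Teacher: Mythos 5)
Your plan follows essentially the same route the paper takes: keep the zero-overlap $2$-approximation from the Type~1/Type~2 lemmas, charge the additional loss to revisits of high-overlap nodes, bound that loss by $\Omega$, and convert it through $t = \lceil \frac{n-s+loss}{s-1}\rceil + 1$ against the lower bound $\frac{n-1}{s-1}$ to obtain the additive term $\frac{\Omega}{n-1} = f_o$. In fact the paper supplies no formal proof of this theorem at all --- it only gives the informal ``conservative'' counting you reconstruct --- so the obstacle you correctly single out (showing the per-node extra loss is bounded by $overlap(v)$ so that the total is at most $\Omega$, and that the correction separates cleanly as $\Omega/(n-1)$) is precisely the step the paper leaves unargued; carrying it out would make your write-up strictly more rigorous than the original.
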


Note, as we provide a lower bound on the length of a search strategy that is independent of the structure of the input DAG, our $f_o$ factor may take on large values for highly overlapping DAGs when the length of the Plank strategy, in reality, may not be far off the instance-optimal solution.

\subsection{Comparison to Splitting Strategies}

Another natural candidate for graph searching would be a BFS style strategy which we investigate next. We show that the DFS style strategy, our Plank algorithm, outperforms the BFS style strategies on a broad class of DAGs. We refer to BFS style strategies as \emph{splitting strategies} and define them as follows.

\begin{defn}
A splitting strategy is a search strategy which sends at least two searchers down as many branches of a section as possible.
\end{defn}

The way in which a splitting strategy distributes the search\-ers over the branches is arbitrary, but the key point is that such a strategy tries to split as much as possible, mimicking a BFS. As with the Plank algorithm, splitting strategies do not move passed nodes with unexplored incoming edges to avoid recontamination. Alternatively, we can think of splitting strategies as split and conquer style strategies.

\begin{lem}
The Plank strategy outperforms all splitting strategies in clearing a $B$-sections with any number of searchers.
\label{lem:b_sec}
\end{lem}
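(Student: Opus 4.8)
The plan is to prove the step-count inequality $T_{\mathrm{Plank}} \le T_{\mathrm{split}}$ on a $B$-section with branching node $v$ and branches $b_1,\dots,b_m$ of lengths $d_1,\dots,d_m$ (so $n = 1 + \sum_i d_i$), by pinning down the per-step clearing rate of each strategy and then comparing the totals. Both strategies must clear the same $n$ nodes, so the comparison reduces to how fast each one converts contaminated nodes into cleared ones.

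First I would characterize the Plank strategy on a $B$-section. Since $v$ has no incoming edges, the first edge of any branch can never be recontaminated once cleared, so the plank keeps a single guard on $v$ and sweeps the remaining $s-1$ searchers down one branch at a time, visiting exactly $s-1$ new nodes per step (the mDFS ordering forces this branch-by-branch traversal, as in the worked example). Hence branch $b_i$ is cleared in $\lceil d_i/(s-1)\rceil$ steps and $T_{\mathrm{Plank}} = \sum_{i=1}^{m} \lceil d_i/(s-1)\rceil$, the only loss being the revisit of $v$ at branch transitions (Case 1 of the earlier claims).

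Second I would bound an arbitrary splitting strategy. To advance the frontier of a branch by one new node, a strategy must guard that node together with its predecessor, so an actively pushed branch ties up two dedicated searchers on the branch proper while a guard on $v$ is still required to clear the first edges of unfinished branches. Therefore a splitting strategy can push at most $\lfloor (s-1)/2\rfloor$ branches at once and clears at most $\lfloor s/2\rfloor$ new nodes per step in steady state, giving $T_{\mathrm{split}} \gtrsim (n-1)/\lfloor s/2\rfloor$. Comparing the two rates, $s-1 \ge \lfloor s/2\rfloor$ with strict inequality for $s \ge 3$, so the plank's per-step progress dominates and $T_{\mathrm{Plank}} \le T_{\mathrm{split}}$, strictly once the branches are long enough to sustain more than a single step.

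The main obstacle is the anomalous opening moves. Because $v$ is shared, a splitting strategy can clear the first edges of many branches at once during its initial sweep across the top, momentarily matching the plank's $s-1$ rate before collapsing to the $\approx s/2$ steady-state rate; symmetrically, for branches of length one the two strategies coincide. The delicate part is thus amortizing these opening moves and verifying the totals inequality $\sum_i \lceil d_i/(s-1)\rceil \le T_{\mathrm{split}}$ for \emph{every} branch-length configuration and every $s \ge 2$ — with $s = 2$ a genuine tie, since a $B$-section admits no real splitting on two searchers — rather than resting on the steady-state rate alone. I would dispatch this by charging the first full sweep as a single step and bounding all subsequent steps by the $\lfloor s/2\rfloor$ rate, so that the per-step advantage of the plank accumulates into the claimed total-step inequality.
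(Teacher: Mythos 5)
Your high-level plan (compare per-step progress of the two strategies) is in the spirit of the paper's argument, but the quantitative bound you place on splitting strategies is wrong, and the comparison collapses because of it. You claim a splitting strategy ``clears at most $\lfloor s/2\rfloor$ new nodes per step'' because each actively pushed branch ties up two searchers. It is true that an active branch needs at least two searchers, but a branch allotted $s_i \ge 2$ searchers advances $s_i-1$ new nodes per step, not one: the only wasted placement on that branch is the re-occupation of the last cleared node (needed because clearing an edge requires both endpoints occupied simultaneously --- note that on a $B$-section recontamination cannot flow backward along a branch, so this is a clearing-rule cost, not a guarding cost). Hence a splitting strategy over $m$ branches makes roughly $s-m$ new nodes of progress per step, which for small $m$ is arbitrarily close to the optimal $s-1$. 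Since you use the $\lfloor s/2\rfloor$ figure as an \emph{upper} bound on splitting's progress in order to \emph{lower}-bound $T_{\mathrm{split}}$, and that upper bound is false, the inequality $T_{\mathrm{Plank}} \le T_{\mathrm{split}}$ does not follow from your rate comparison. Your characterization of the Plank side is also slightly off ($\sum_i \lceil d_i/(s-1)\rceil$ overcounts, since the strategy packs $s$ placements per step across branch boundaries and clears $s$ new nodes in its first step), though that error is in the harmless direction.

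The paper instead compares the two strategies through the \emph{loss} function (the count of steps visiting fewer than $s-1$ new nodes), which is what actually determines search time via $t=\lceil (n-s+\mathit{loss})/(s-1)\rceil+1$. After discarding the degenerate case where some branch is clearable in one step (there Plank and splitting coincide), the argument is: a splitting strategy leaves one searcher stationary per branch frontier in \emph{every} step, incurring a loss of $m-1$ per step, hence total loss at least $m-1$ since each branch needs at least two steps; the Plank strategy incurs a loss of at most $1$ only at each of the at most $m-1$ branch transitions (when it revisits the branching node), hence total loss at most $m-1$. Comparing accumulated losses rather than steady-state rates is what makes the bound go through uniformly over all searcher distributions and branch-length configurations --- exactly the case analysis you flagged as the ``delicate part'' but did not carry out. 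To repair your proof you would need to replace the $\lfloor s/2\rfloor$ rate with a per-step loss of $m-1$ for splitting and a total loss of $m-1$ for Plank, at which point you have reproduced the paper's argument.
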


\begin{proof}
First, notice that if there are enough searchers available to clear a branch in a single step then both strategies are identical.  This would only occur if the splitting strategy had enough searchers to clear an entire branch as it allocates strictly less searchers per branch compared to the Plank strategy. Thus, we can restrict our attention to branch sets where every branch requires two or more steps to clear for all searcher distributions.

We consider the loss incurred on a $B$-section. A splitting strategy which distributes the available searchers among the $m$ branches will incur a loss of $m-1$ for each step it takes to clear the branches since $m$ searchers remain stationary. Then, by our assumption that each branch takes at least two steps to clear we see that a splitting strategy incurs a loss greater than or equal to $m-1$. Conversely, even if every branch required more than two steps to clear, the Plank strategy will never incur a loss greater than $m-1$. This follows from the fact that once a leaf node is reached the strategy will return to the branching node during a step that will possibly only visit $s-2$ new nodes incurring a loss of 1. If, however, the step ends exactly at the leaf node there will be no loss incurred. Since there are $m$ branches we will encounter this "($s-2$)-visiting step" a maximum of $m-1$ times giving a total loss of no greater than $m-1$.
\end{proof}

Furthermore, the proof for $R$-sections unfolds exactly as the proof for $B$-sections does with the worst case loss being less that or equal to the best case performance of any splitting strategy.

\begin{lem}
The Plank strategy outperforms all splitting strategies in clearing $R$-sections with any number of searchers.
\label{lem:r_sec}
\end{lem}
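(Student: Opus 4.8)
The plan is to mirror the argument of Lemma \ref{lem:b_sec} under the edge-reversal duality between the two section types. An $R$-section is obtained from a $B$-section by reversing the orientation of every edge, so that the $m$ branches, instead of emanating from a common source (the branching node), now terminate at a common sink (the root). Since the quantity we track — the loss accumulated while clearing the $m$ branches — is defined purely in terms of how many new nodes are visited per step, and since the branch lengths $d_1,\dots,d_m$ are unchanged by reversal, I expect the same counting to carry over once the recontamination bookkeeping is checked against the reversed edge directions.

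First I would dispose of the trivial case exactly as in the $B$-section proof: if the searcher budget $s$ is large enough that a splitting strategy can finish an entire branch in a single step, then the splitting and Plank strategies coincide and there is nothing to prove. Hence I may restrict attention to branch configurations in which every branch requires at least two steps to clear under every searcher distribution.

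Next I would bound from below the loss of an arbitrary splitting strategy. A splitting strategy advances along all $m$ branches in parallel, and to prevent recontamination each active branch must retain one stationary guard at the node separating its cleared portion from its contaminated portion. The only change from the $B$-section case is the \emph{location} of that guard: because contamination now propagates toward the root rather than outward from the branching node, the guard sits at the root-side end of the cleared segment rather than at the branching-node side. The count, however, is unaffected — $m$ searchers remain stationary, so at most $s-m$ new nodes are visited per step, giving a loss of at least $m-1$ per step, and by the two-step assumption a total loss of at least $m-1$. Then I would bound the Plank strategy's loss from above by $m-1$: running $mDFS$ on an $R$-section, the search clears one branch at a time and, by the backtracking rule, cannot resolve the root until its last incident edge is cleared, so the root plays precisely the role the branching node played for $B$-sections — it is revisited at most $m-1$ times, each revisit costing a loss of at most $1$ (and none when a step ends exactly at the root). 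Combining the two bounds gives Plank loss $\le m-1 \le$ splitting loss, which is the claim.

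The only genuine obstacle is the one flagged above: confirming that reversing all edges merely \emph{relocates}, rather than multiplies, the stationary guards for the splitting strategy and the revisits for the Plank strategy, so that the symmetric counting is legitimate and no extra loss is introduced by the reversed recontamination direction. Once that verification is in place, the inequality chain is identical to that of Lemma \ref{lem:b_sec} and the lemma follows.
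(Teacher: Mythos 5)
Your proposal is correct and takes essentially the same route as the paper, which simply observes that the $R$-section argument unfolds exactly as the $B$-section argument with the root playing the role of the branching node. Your additional verification that edge reversal only relocates (rather than multiplies) the stationary guards and revisits is sound and in fact more explicit than what the paper provides.
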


Next, we prove that the Plank strategy is optimal for $D$-sections. We refer to the start node by $n_s$ and the end node by $n_e$. Furthermore, we consider $D$-sections to have $m$ branches $b_1, \dots, b_m$ which each have endpoints $n_s$ and $n_e$ and no other nodes in common.

\begin{lem}
The Plank strategy outperforms all splitting strategies in clearing $D$-sections with any number of searchers.
\label{lem:d_sec}
\end{lem}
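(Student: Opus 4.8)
The plan is to argue through the $loss$ measure of Section 4: by the identity $t = \lceil \frac{n-s+loss}{s-1}\rceil + 1$, the strategy that clears the D-section with the smaller total loss clears it in no greater time, so it suffices to upper bound the loss of the Plank strategy and lower bound the loss of an arbitrary splitting strategy, exactly as in Lemmas \ref{lem:b_sec} and \ref{lem:r_sec}. Write $d_1,\dots,d_m$ for the internal lengths of the branches $b_1,\dots,b_m$, so $n = 2 + \sum_i d_i$. First I would dispose of the trivial regime as in Lemma \ref{lem:b_sec}: if $s$ is large enough that every branch-path $n_s, v_1, \dots, v_{d_i}, n_e$ fits inside a single step, then a splitting strategy commits strictly fewer searchers per branch than the plank and behaves identically to it, so the claim is immediate. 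Hence I restrict to D-sections in which at least one branch needs two or more steps.

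Next I would bound the Plank loss. The key structural fact, inherited from the $mDFS$ ordering, is that $n_e$ is never entered until all $m$ of its incoming edges have been explored, so the plank sweeps the branches one at a time as a single contiguous block and deviates from the optimal $s-1$ new nodes per step only at the $m-1$ branch transitions. At a transition the plank must re-place searchers on the two shared nodes $n_s$ and $n_e$, so by Claim 1 each such step costs at most $2$, giving a total Plank loss of at most $2(m-1)$; when a branch happens to end on a plank boundary the corresponding transition is free, which is precisely the mechanism that lets the plank attain the instance-optimal loss advertised just before the lemma.

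The core of the argument is the matching lower bound for any splitting strategy, and here I would use the following clean observation. A strategy that advances $k$ disjoint contiguous groups of searchers down $k$ distinct branches must, in order to clear the next consecutive edge of each branch, leave at least one searcher of every group stationary on the just-cleared node; hence such a step visits at most $s-k$ new nodes and incurs a loss of at least $(s-1)-(s-k) = k-1$. Since a splitting strategy by definition descends $k \ge 2$ branches simultaneously, every step of its parallel descent costs at least one unit of loss, and summing over the $\Theta\!\left(\sum_i d_i / (s-k)\right)$ descent steps produces a total loss that grows linearly in the combined branch length $\sum_i d_i$, dominating the constant bound $2(m-1)$ as soon as any branch is long. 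I would then close the remaining short-branch configurations by direct inspection, checking that the forced stationary searcher in each of the $k\ge 2$ descending groups still makes the splitting loss at least the actual Plank loss; the strongest form of this step is to show that the plank realizes the minimum loss achievable by \emph{any} strategy on the D-section, from which outperformance of every splitting strategy follows at once.

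The main obstacle I anticipate is the bookkeeping around the two shared endpoints. Unlike the single branching node of a B- or R-section, a D-section forces every branch to be sealed at $n_e$, whose in-degree is $m$, and it is this sealing that inflates the plank's transition budget to $2(m-1)$ rather than $m-1$. The delicate point is to argue that a splitting strategy cannot amortize the $n_e$-sealings across branches of differing lengths, so that its half-idle descending groups genuinely accumulate the per-step loss $k-1$ I claimed, while simultaneously certifying that the plank's revisits of \emph{both} $n_s$ and $n_e$ stay inside that budget. Handling the distribution of searchers among branches (two per branch versus larger groups) and the parity of $s$ uniformly over ``any number of searchers,'' together with the short-branch boundary cases where Plank's loss is near $2(m-1)$, is where I expect the most careful case analysis.
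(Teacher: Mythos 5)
Your high-level framing matches the paper's: exclude the regime where a splitting strategy clears whole branches in one step (where the two strategies coincide), observe that the Plank's worst-case loss on a $D$-section is $2(m-1)$ because both $n_s$ and $n_e$ are revisited at each of the $m-1$ branch transitions, and lower-bound the loss of a splitting strategy by the number of stationary frontier searchers. But there is a genuine gap where you defer the ``short-branch configurations'' to ``direct inspection'': that inspection \emph{is} the proof, and your loss-comparison machinery does not close it. The paper's argument splits on whether branches require $2$ steps, $2$ or $3$ steps, or more than $3$ steps under the splitting distribution $s_1,\dots,s_m$, and in the first two cases it does not compare losses at all --- it directly counts nodes to show the Plank also finishes in $2$ (resp.\ $3$) steps, e.g.\ showing the residue after the Plank's first step is $s+2-m\le s$ when each branch has $2s_i-1$ nodes, with a separate check for $m=2$, and an analogous $\sum_{j\neq 1}s_j-2m+6\le s$ count in the three-step case; the $>3$-step case is then reduced to the $2$-or-$3$-step case. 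Your plan cannot substitute the crude bound $loss_{\mathrm{Plank}}\le 2(m-1)$ here: with $n\approx 2s-m+2$ (the two-step regime) that bound only certifies $t\le\bigl\lceil\frac{s+m}{s-1}\bigr\rceil+1=3$ while the splitting strategy achieves $t=2$, so the worst-case loss bound is too weak exactly where the comparison is tight, and the ceiling in $t=\lceil\frac{n-s+loss}{s-1}\rceil+1$ means a loss comparison does not directly translate into a step comparison anyway.

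Two smaller points. Your asymptotic claim that the splitting loss grows linearly in $\sum_i d_i$ over the descent steps is fine for long branches and is consistent with the paper's remark that a best-case splitting strategy needing $3$ steps per branch already loses $\ge 2(m-1)$, but it is doing no work in the hard cases. And your ``strongest form'' --- that the Plank attains the minimum loss over \emph{all} strategies on a $D$-section --- is more than the lemma asserts and more than the paper proves; the paper only establishes that the Plank matches or beats every \emph{splitting} strategy, and proving global optimality would be a separate (and harder) undertaking.
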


\begin{proof}
As before, we will be considering splitting strategies with an arbitrary searcher distribution over branches, but will ignore branches which would be cleared in a single step in the splitting strategy as this would be mimicked exactly by the Plank strategy, i.e. these splitting strategies are indistinguishable from the Plank strategy.

Now, as we saw in Lemma~\ref{lem:b_sec}, the worst case behaviour of the Plank strategy on $B$-sections had a loss of $m-1$. For $D$-sections, the Plank strategy has a worst case loss of $2(m-1)$ because in addition to the top node we also revisit the bottom node $n_e$ when clearing each branch. Thus, both the bottom and top nodes are revisited when clearing subsequent branches adding 2 to the loss function each time. So, in a similar fashion to the proof for $B$-sections, a best case splitting strategy which requires 3 steps to clear each branch will require greater than or equal to the number of steps required by a Plank strategy. It remains to be shown that splitting strategies in which some number of branches require 2 steps to clear are no better than the Plank strategy. We investigate them each separately next.

We define the following three cases for splitting strategies: (1) each branch requires 2 steps to clear, (2) branches require 2 or 3 steps to clear, and (3) there exists a branch which requires greater than 3 steps to clear. We define the cases in this way because we will see that case 3 can be reduced to case 2. \\

\textbf{Case 1.} \emph{Branches each require 2 steps to clear.} \\
Here, the splitting strategy sends $s_i$ searchers down $b_i$ and the total number of searchers is $s = \sum_{i=1}^m s_i$. Again, considering a best case scenario for the splitting strategy, the clearance will require exactly two steps which leads to $2s_i - 1$ nodes in each branch. Now, the Plank strategy will send all $s$ searchers down some $b_i$ and then move onto the remaining branches. Sending all $s$ searchers down $b_i$ leads to two possible behaviours. If $2s_i - 1 > s$ then the Plank strategy cannot clear the entire $b_i$ branch in a single step and will have some nodes from $b_i$ left over to clear after step one. On the other hand, if $2s_i - 1 \le s$ then the Plank strategy will clear all of $b_i$ in a single step and have excess searchers available to start clearing other branches in the first step. Without loss of generality, we will assume $n_s$ and $n_e$ are a part of the first branch cleared in all subsequent cases.

First, in the case where $2s_i - 1 < s$ we partially clear $b_i$ and the number of nodes left to clear in $b_i$ after the first step is given by,

\begin{align*}
2s_i - 1 - s & = 2s_i - (s_1 + \dots + s_i + \dots + s_m) - 1 \\
& = s_i - (s_1 + \dots + s_{i-1} + s_{i+1} + \dots + s_m) - 1
\end{align*}

Then, in the second step we must clear all remaining branches as well as leave two searchers stationary. One searcher must be left at $n_s$ and the other at the furthest node reached in the partial clearance of $b_i$. The number of nodes left to clear is given by $\sum_{j \ne i} (2s_j - 1) + 2 + s_i - \sum_{j \ne i} s_j - 1 = s_i + \sum_{j \ne i} s_j + 2 - m = s + 2 - m $. Then, since $m \ge 2$, we have $s$ or fewer nodes left to clear in the second step with our $s$ available searchers.

Second, in the case where $2s_i - 1 \le s$ we clear $b_i$ in step one and clear an additional number of nodes with the excess searchers given by

\begin{align*}
s - (2s_i - 1) & = (s_1 + \dots + s_i + \dots + s_m) - 2s_i + 1 \\
& = (s_1 + \dots + s_{i-1} + s_{i+1} + \dots + s_m) - s_i + 1
\end{align*}

These excess searchers can clear any other branches since both $n_s$ and $n_e$ are guarded. Notice that some branch $b_r$ will be partially cleared at the end of step one using the excess searchers. Then, in the second step we must clear all remaining branches as well as leave three searchers stationary. One searcher must be left at $n_s$, one at $n_e$ and the other at the furthest node reached in the partial clearance of $b_r$. The number of nodes left to clear is given by $\sum_{j \ne i} (2s_j - 1) - (\sum_{j \ne i} s_j - s_i + 1) + 3 = s + 3 - m$. Here we have enough searchers for $m \ge 3$, but must investigate $m=2$ individually. In the case of $m=2$ where we only have $b_1$ and $b_2$ we are only required to leave 2 searchers stationary, namely one at $n_e$ and the other at the furthest node reached in the partial clearance of $b_2$. Thus, the number of nodes left to clear in step two is indeed $s + 2 - m$ which is achievable with the $s$ available searchers.

This shows that the Plank strategy matches the best case splitting strategy for case 1. \\

\textbf{Case 2.} \emph{Branches each require 2 or 3 steps to clear.} \\
Without loss of generality we let $b_1$ be a branch requiring two steps to clear, $b_2$ be a branch requiring three steps to clear and every other branch requiring two or three steps to clear. Thus, $b_1$ has $2s_1 - 1$ nodes, $b_2$ has $3s_2 - 2$ nodes, and every other branch has no more than $3s_i - 2$ nodes for $i \ne 1,2$. Also w.l.g. we choose to clear $s_1$ first in the Plank strategy. As in case 1, we consider the two scenarios where $2s_1 - 1 > s$ or $2s_1 - 1 \le s$.

First, in the case where $2s_1 - 1 > s$ we partially clear $b_1$ and the number of nodes left to clear in $b_1$ after the first step is given by,

\begin{align*}
2s_1 - 1 - s & = 2s_1 - (s_1 + \dots + s_i + \dots + s_m) - 1 \\
& = s_1 - (s_2 + \dots + s_m) - 1
\end{align*}

The rest of $b_1$ is cleared in step two with the number of excess searchers available after clearing $b_1$ given by $s - (s_1 - \sum_{j \ne 1} s_j - 1) - 1 = (s_1 + \dots + s_m) - s_1 + \sum_{j \ne 1} s_j = 2\sum_{j \ne 1} s_j$. Then, these excess searchers are used to partially clear the remaining branches. Note that $| b_i | \le 3s_i - 2$ and so we analyze a worst case scenario for the Plank strategy where each $b_i$ has all $3s_i - 2$ nodes. The number of remaining nodes to be cleared in step three is given by $\sum_{j \ne 1} (3s_j - 2) + 1 - 2\sum_{j \ne 1} s_j = \sum_{j \ne 1} s_j - 2m + 3$. Then, between steps two and three we must leave three searchers stationary. Thus, the number of nodes left to clear in step three is $\sum_{j \ne 1} s_j - 2m + 6$. We know that $s_1 \ge 2$ and $m \ge 2$ thus we have that $\sum_{j \ne 1} s_j - 2m + 6$ is less than or equal to $s$ allowing the Plank strategy to successfully complete the clearance in three steps.

Second, in the case where $2s_i - 1 \le s$ we clear $b_1$ in step one and clear an additional number of nodes with the excess searchers given by

\begin{align*}
s - (2s_1 - 1) & = (s_1 + \dots + s_m) - (2s_1 - 1) \\
& = (s_2 + \dots + s_m) - s_1 + 1
\end{align*}

These excess searchers can clear any other branches since both $n_s$ and $n_e$ are guarded. We again analyze a worst case scenario for the Plank strategy where each $b_i$ has all $3s_i - 2$ nodes. The number of nodes remaining after step one is given by $\sum_{j \ne 1} (3s_j - 2)  - (\sum_{j \ne 1} s_j - s_1 + 1) = 2\sum_{j \ne 1} s_j - 2m + 1 + s_1$. Here, between steps one and two we must leave searchers at $n_s$, $n_e$, and the last node reached in the partial clearance of some branch $b_i$. The number of nodes left to clear after step two is given by $2\sum_{j \ne 1} s_j - 2m + 1 + s_1 + 3 - s = \sum_{j \ne 1} s_j + 4 - 2m$. Again, we know that $m \ge 2$ and thus $\sum_{j \ne 1} s_j + 4 - 2m \le \sum_{j \ne 1} s_j < s$. Therefore, the Plank strategy can successfully finish the clearance in three steps.

This shows that the Plank strategy matches the best case splitting strategy for case 2. \\

\textbf{Case 3.} \emph{There exists a branch which requires greater than 3 steps to clear.} \\
The final case can be shown to reduce to Case 2. Consider $m$ branches which each require $t_i$ steps to clear where each $t_i \ge 2$. Now, as we have seen, if some branch $b_i$ requires $t_i$ steps to clear and some other branch $b_j$ requires $t_i + 1$ steps to clear, then excess searchers available from $b_i$ in step $t_i + 1$ will be inconsequential as $b_j$ already had enough searchers to clear $b_j$ by step $t_i + 1$. However, if instead, $b_j$ required $t_i + 2$ or greater steps to clear, the excess searchers from $b_i$ can actually have an impact on the number of steps required to clear $b_j$. In the best case, the excess searchers from $b_i$ allow $b_j$ to be cleared in only one additional step. However, since we know that each branch requires greater than or equal to two steps to clear, the best case for splitting strategies is to reduce branches for which $t_j \ge 4$ to requiring three steps leaving us in a situation resembling case 2. Note that this best case may not even be achievable given the structure of the $D$-section in question. \\

Thus, we have shown that the Plank strategy matches or outperforms all possible splitting strategies on an arbitrary $D$-section.
\end{proof}

Now, we have that the Plank strategy outperforms all splitting strategies on each of the sections individually. Then, the fact that any DAG can be decomposed into sections of our four types, which we prove in the next section, allows us to observe that the loss due to the Plank algorithm will be the same in its clearance of decomposed sections within a DAG as if they were being cleared in isolation conditioned on the length of the section's branches. For $R$-sections and $D$-sections the Plank algorithm will not move passed the bottom branching node and will thus return to one of the top nodes of the section (possibly after clearing sections above the current one) and ultimately clear the section with the same loss as if it was isolated. For $B$-sections the Plank strategy may clear downstream sections before returning to the branching node at the top of the section. Therefore, in order to be able to make a piecewise analysis of the DAG we require that the $B$, $R$, and $D$-sections contain branches of length $s$ or greater. While the analysis does not require this size restriction for individual sections, when analyzing a DAG without these ``large" sections there exist instances where a splitting strategy will incur no loss in some section where the Plank strategy does incur some loss due to the DFS nature of the Plank strategy. Thus we have the following result.

\begin{thm}
The Plank strategy outperforms all splitting strategies in clearing DAGs with ``large" $B$, $R$, and $D$-sections with any number of searchers.
\end{thm}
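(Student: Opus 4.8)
The plan is to reduce the statement to the three per-section lemmas already established (Lemmas~\ref{lem:b_sec}, \ref{lem:r_sec}, and \ref{lem:d_sec}) by means of a piecewise loss argument built on the DAG decomposition theorem of Section~\ref{sec:dag_decomp}. Since the number of steps taken by any search strategy is a nondecreasing function of its loss, via $t = \lceil (n - s + loss)/(s-1) \rceil + 1$, it suffices to compare total losses: if $loss(\text{Plank}) \le loss(\sigma)$ for every splitting strategy $\sigma$, then the Plank strategy takes no more steps and the theorem follows. So the whole proof is really a statement about loss additivity plus the section-by-section inequalities we have already proved.

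First I would invoke the decomposition theorem to write $G$ as a union of $B$-, $R$-, $D$-, and $P$-sections, and then establish that under the ``large section'' hypothesis the loss incurred by the Plank strategy is additive across sections, i.e. $loss(\text{Plank on } G) = \sum_{\theta} loss(\text{Plank on } \theta)$, where each summand equals the in-isolation loss analyzed in the section lemmas. The length restriction is precisely what makes this hold: since every branch of a $B$-, $R$-, or $D$-section has length at least $s$, each branch needs at least $\lceil s/(s-1)\rceil \ge 2$ steps to clear, so the clearance of a branch can never finish early and spill over into a neighbouring section within a single step. Consequently the only moments at which loss is charged — revisiting a top or bottom branching node — are confined to section boundaries. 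For $R$- and $D$-sections the Plank strategy never passes the bottom branching node and so clears the section with exactly its isolated loss; for $B$-sections it may first clear downstream sections and only then return to the branching node, but that return step still charges the same per-branch loss as in isolation.

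Next I would observe that the same decomposition lower-bounds the loss of any splitting strategy section by section. By definition a splitting strategy sends at least two searchers down as many branches of a section as possible, and this behaviour, together with the no-recontamination constraint forbidding it from passing nodes with unexplored incoming edges, is local to each section; hence its total loss is at least the sum of the per-section losses established in Lemmas~\ref{lem:b_sec}--\ref{lem:d_sec}. Combining the additive upper bound on $loss(\text{Plank})$ with the additive lower bound on $loss(\sigma)$, and applying the three section lemmas termwise (with $P$-sections contributing zero loss on both sides), yields $loss(\text{Plank}) \le loss(\sigma)$ and therefore the desired step-count inequality for every splitting strategy.

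The hard part will be making loss additivity fully rigorous, in particular ruling out cross-section interactions for the splitting strategy and verifying that the Plank strategy's excursions into downstream sections before returning to a $B$-section's branching node genuinely reproduce the isolated loss rather than stranding extra stationary searchers. This is exactly where the ``large section'' assumption is indispensable: without branches of length at least $s$, a branch could be cleared in a single step that simultaneously begins a neighbouring section, destroying the clean boundary and, as the discussion preceding the theorem notes, permitting a splitting strategy to avoid loss in a section where the Plank strategy cannot. I would therefore isolate this no-spillover claim as the central lemma of the argument and prove it directly from the branch-length bound before assembling the additive comparison.
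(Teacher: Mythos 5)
Your overall route is the same as the paper's: decompose $G$ into sections, apply Lemmas~\ref{lem:b_sec}, \ref{lem:r_sec}, and \ref{lem:d_sec} per section, and handle the boundary effects at shared top/bottom nodes. However, the central claim you propose to prove --- exact additivity, $loss(\text{Plank on } G) = \sum_{\theta} loss(\text{Plank on } \theta)$ with each summand equal to the in-isolation loss --- is false as stated, and this is precisely the step that would fail. The problem is \emph{lateral} overlaps: two sections sharing a bottom node (or two sections sharing a top node). In that configuration the shared node is visited for the first time while clearing one section and then necessarily revisited while clearing the other, charging an extra unit of loss that appears in \emph{neither} section's isolated analysis; so the Plank strategy's total loss genuinely exceeds the sum of the isolated losses. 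Your plan only anticipates the benign downstream case (bottom of $\theta_1$ meets top of $\theta_2$), where the loss at the shared node cleanly transfers from one section to the other and additivity does hold.

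The paper's proof repairs this not by restoring additivity but by a three-way case analysis on how two sections can overlap (bottom--top, bottom--bottom, top--top) and by showing that in the lateral cases the \emph{same} extra loss is unavoidable for every splitting strategy as well: either the splitting strategy clears the two lateral sections simultaneously and pays at least as much loss from its stationary searchers across both sections, or it reaches the shared node in different steps and pays the same revisit penalty. In other words, the quantity to control is not the excess of $loss(\text{Plank on } G)$ over the per-section sum, but the \emph{difference} of that excess between the two strategies, which is shown to be nonpositive. If you reformulate your ``no-spillover central lemma'' as that matched-excess claim rather than as an equality, the rest of your argument (termwise application of the three lemmas, $P$-sections contributing nothing, and monotonicity of $t$ in $loss$) goes through as you describe.
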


\begin{proof}
Two arbitrary sections in $G$, $\theta_1$ \& $\theta_2$ may be connected in three ways: (1) a bottom node of $\theta_1$ overlaps with a top node of $\theta_2$, (2) a bottom node of $\theta_1$ overlaps with a bottom node of $\theta_2$, or (3) a top node of $\theta_1$ overlaps with a top node of $\theta_2$.

In (1), the loss attributed to the overlapping node is divided between the sections. While $\theta_1$ is being cleared the loss is associated with $\theta_1$. Then, once $\theta_1$ becomes cleared and a strategy moves on to $\theta_2$, the loss will become associated with $\theta_2$. Thus, we see that the first step in which nodes from $\theta_2$ are cleared will not incur any loss, as is the case when clearing isolated sections.

In (2) we say $\theta_2$ is \emph{lateral} to $\theta_1$ and vice versa. In the Plank strategy, the overlapping node will be visited for the first time in the clearance of one of the sections. Then, when the node is reached in clearing the other section there will be a loss incurred. However, notice that this loss will also be incurred for splitting strategies. First, it is possible the splitting strategy reaches the overlapping node at the same time if $\theta_1$ and $\theta_2$ are being cleared simultaneously. However, in this case, the splitting strategy will incur a loss from all but one branch between both $\theta_1$ and $\theta_2$ since they are being cleared simultaneously and thus the extra loss incurred by the Plank strategy will also be incurred in the splitting strategy. On the other hand, if the splitting strategy does not reach the overlapping node in the same step, it too will incur an extra loss when the overlapping node is reached for the second time.

Finally, (3) mirrors the situations which arise in (2) and we see that the extra loss incurred by the Plank strategy is also incurred by the splitting strategy.

Then, the proof follows directly from Lemma's \ref{lem:b_sec}, \ref{lem:r_sec}, and \ref{lem:d_sec} and the fact that we can decompose any DAG into $B$, $R$, $D$, and $P$-sections.
\end{proof}

%For the sake of brevity, we give a high level description of our results regarding splitting strategies. We begin by proving the Plank strategy outperforms all splitting strategies on each type of section individually. The proof for $B$ and $R$-sections follows from a direct comparison of the loss required by any splitting strategy to the upper bound on the loss possible by the Plank strategy. Then, in the case of $D$-sections we must investigate several cases from which we again show the Plank strategy outperforms any splitting strategy.

%Now, the fact that any DAG can be decomposed into sections of our four types, which we prove in the next section, allows us to observe that the loss due to the Plank algorithm will be the same in its clearance of decomposed sections within a DAG as if they were being cleared in isolation conditioned on the length of the section's branches. The condition required is that the $B$, $R$, and $D$-sections contain branches of length $s$ or greater which we refer to as having \emph{large} sections. Thus we are able to show the following result in the full version of the paper \cite{simpson2014}.

%\begin{thm}
%The Plank strategy outperforms all splitting strategies in clearing DAGs with large $B$, $R$, and $D$-sections with any number of searchers.
%\end{thm}

\subsection{Decomposing a DAG}
\label{sec:dag_decomp}

We claim that a DAG can be decomposed into sections of our four types. Formally, we define a \emph{valid} decomposition as follows.

\begin{defn}
Given a DAG $G$ a decomposition $\Delta$ is $valid$ if and only if it consists of sections $\theta_i = (V_i, E_i)$ of type $B$, $R$, $D$, or $P$ such that $\bigcup_{i} V_i = V$, $\bigcup_{i} E_i = E$ and $E_i \cap E_j = \emptyset$ for all $i,j$. Additionally, sections may only overlap on top and bottom nodes.
\end{defn}

Next, we define an ordering among valid decompositions.

\begin{defn}
We say $\Delta_1 < \Delta_2$ if $\Delta_1, \Delta_2$ are valid decompositions and $\Delta_1$ can be obtained from $\Delta_2$ by some number of merge operations.
\end{defn}

A merge operation takes two valid sections and combines them to form a new valid section. Formally, given two sections $\theta_1 = (V_1, E_1)$ and $\theta_2 = (V_2, E_2)$, $merge(\theta_1, \theta_2) = (V_1 \cup V_2, E_1 \cup E_2)$. We outline the possible merge operations in the below table.

\begin{center}
  \begin{tabular}{ c | c }
    Components & Possible Merge Outcome \\ \hline
    $P$ & $B$, $R$, $D$, $P$ \\
    $B$ & $B$ \\
    $R$ & $R$ \\
    $D$ & $D$ \\
    $P$, $B$ & $B$ \\
    $P$, $R$ & $R$ \\
    $P$, $D$ & $D$ \\
    $B$, $R$ & $D$ \\
    $P$, $B$, $R$ & $D$ \\
  \end{tabular}
\end{center}

Then, we can define a minimality property for decompositions.

\begin{defn}
A decomposition $\Delta$ is minimal if $\neg \exists \: \Delta'$ such that $\Delta' < \Delta$.
\end{defn}

Finally, we show how to compute a minimal decomposition for any DAG. Consider the following procedure on a topological ordering $\Gamma$ of a DAG $G$. In the first phase we will move through $\Gamma$ one node at a time. Starting at the current node $v$ we will traverse $\Gamma$ for each outgoing edge of $v$ until we reach a node with multiple incoming edges or zero or multiple outgoing edges. This sequence of nodes will be appended to a list $\eta$. Phase one is presented in the pseudocode of Algorithm \ref{alg:decomp1}.

\begin{algorithm}[h]
\caption{Phase one of the minimal decomposition algorithm}
\begin{algorithmic}
\Require the topological ordering $\Gamma$
\Ensure the list $\eta$
	\State $\eta \gets \emptyset$
	\ForAll {nodes $v \in \Gamma$}
		\ForAll {outgoing edges $e$ of $v$}
			\State $u \gets e.destination$
			\State $seq \gets \{v, u\}$
			\While {$u$ has exactly one outgoing edge $e_o$}
				\State $u \gets e_o.destination$
				\State append $u$ to $seq$
			\EndWhile
			\State append $seq$ to $\eta$ 
		\EndFor
	\EndFor
\end{algorithmic}
\label{alg:decomp1}
\end{algorithm}

After this phase, each edge will be in a unique sequence in $\eta$. In a second phase, for each sequence $\lambda$ in our list we will combine $\lambda$ with other sequences located after $\lambda$ in $\eta$ which have not already been designated to a section to create a section of one of the four types. Once $\eta$ has been traversed each edge of $G$ will be in a unique section.

The way in which we combine sequences is as follows. Consider two sequences $\lambda_1 , \lambda_2$ made up of nodes $u_1, \dots, u_{m_1}$ and $v_1, \dots, v_{m_2}$ respectively. We proceed through a series of possible scenarios. First, if $u_1 = v_1$ and $u_{m_1} = v_{m_2}$ we combine $\lambda_1$ and $\lambda_2$ into a $D$-section. Second, if $u_1 = v_1$ we combine $\lambda_1$ and $\lambda_2$ into a $B$-section. Third, if $u_{m_1} = v_{m_2}$ we combine $\lambda_1$ and $\lambda_2$ into a $R$-section. Finally, if the previous three scenarios fail to be met, we leave $\lambda_1$ as a $P$-section. Phase two is captured in the pseudocode of Algorithm \ref{alg:decomp2}. Note that we refer to the first and last nodes in a sequence $\lambda$ by $\lambda_s$ and $\lambda_e$ respectively.

\begin{algorithm}[h]
\caption{Phase two of the minimal decomposition algorithm}
\begin{algorithmic}
\Require the list $\eta$
\Ensure a collection of sections of type $B$, $R$, $D$, and $D$
	\ForAll {sequences $\lambda \in \eta$}
		\State collect all unclaimed sequences $\alpha \in \eta$ such that \\ \hspace{4mm} $\lambda_s = \alpha_s$ in a list $L_1$
		\If {$L_1 \ne \emptyset$}
			\State collect all unclaimed sequences $\beta \in L_1$ \\ \hspace{9mm} such that $\lambda_e = \beta_e$ in a list $L_2$
			\If {$L_2 \ne \emptyset$}
				\State create a $D$-section from $L_2$ and $\lambda$
				\State mark $L_2$ and $\lambda$ as claimed
			\Else
				\State create a $B$-section from $L_1$ and $\lambda$
				\State mark $L_1$ and $\lambda$ as claimed
			\EndIf
			\State \textbf{continue}
		\EndIf
		\State collect all unclaimed sequences $\gamma \in \eta$ such that $\lambda_e = \gamma_e$ \\ \hspace{4mm} in a list $L_3$
		\If {$L_3 \ne \emptyset$}
			\State create a $R$-section from $L_3$ and $\lambda$
			\State mark $L_3$ and $\lambda$ as claimed
			\State \textbf{continue}
		\EndIf
		\State create a $P$-section from $\lambda$
		\State mark $\lambda$ as claimed
	\EndFor
\end{algorithmic}
\label{alg:decomp2}
\end{algorithm}

\begin{thm}
For any DAG $G$, Algorithm~\ref{alg:decomp1} and Algorithm~\ref{alg:decomp2} produce a minimal decomposition $\Delta$.
\end{thm}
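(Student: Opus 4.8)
The plan is to prove two things about the output $\Delta$: that it is a \emph{valid} decomposition, and that it is \emph{minimal}. I would dispatch validity quickly and reserve the real effort for minimality. The workhorse for everything is a structural lemma about Phase one: the sequences collected in $\eta$ form an edge-partition of $G$ into \emph{maximal branch-free directed paths}, that is, maximal paths whose internal vertices each have in-degree and out-degree exactly one. This is exactly what the stopping rule enforces (a sequence is extended out of its frontier vertex only while that vertex has a unique outgoing edge and no second incoming edge, so it terminates precisely at a sink, at a vertex of out-degree $\ge 2$, or at a vertex of in-degree $\ge 2$). First I would use this to record three consequences: each edge lies in exactly one sequence; the sequence vertex sets cover $V$, since $G$ is weakly connected with $n\ge 2$ and so has no isolated vertex; and the only vertices that two distinct sequences can share are \emph{branch vertices} (in- or out-degree $\ge 2$) or sources/sinks, which are exactly the candidate top/bottom nodes of sections.

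For validity, I would observe that Phase two assigns each sequence to exactly one section through the ``claimed'' flag and never splits a sequence, so the sections inherit edge-disjointness from the lemma and together cover $E$ and hence $V$. Each object it builds is by construction one of the four types, according to the tested sharing conditions: sequences sharing only a first node become a $B$-section, sharing only a last node an $R$-section, sharing both a $D$-section, and an unmatched sequence a $P$-section. Finally, by the lemma any vertex shared between two output sections is a branch vertex, hence a top or bottom node, so the overlap condition in the definition of a valid decomposition holds.

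For minimality I would show that no two output sections (and, for the table's three-way rows, no three) admit a valid merge, which is exactly the definition. I would organize the argument around the merge table and split its entries into two families. The first family consists of the merges that force two sections to meet at a vertex \emph{internal} to the merged section, namely a vertex of in-degree and out-degree one in $G$: these are the $P,P\to P$ concatenation and the diamond-forming merges $B,R\to D$ and $P,B,R\to D$, where the branch-leaves of the $B$-part must coincide with the branch-starts of the $R$-part at mid-branch vertices. By the structural lemma such a $(1,1)$-vertex is never a sequence endpoint, hence never a section boundary, so these merges simply cannot apply to output sections. The second family is every remaining entry ($B/B$ at a common top, $R/R$ at a common bottom, $D/D$ at a common top and bottom, $P,B\to B$, $P,R\to R$, $P,D\to D$, and two $P$'s sharing a start or an end); each of these occurs at a genuine branch vertex $c$ in a prescribed orientation, and here I would invoke the exhaustive collection of Phase two: when the first surviving section incident to $c$ is formed, the lists $L_1,L_2,L_3$ gather \emph{all} still-unclaimed sequences that share $c$ in that orientation and fold them into one section.

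The hard part will be the second family, because Phase two claims sequences greedily in $\eta$-order, so a sequence incident to $c$ can be claimed by an earlier section before the section ``natural'' to $c$ is built. I would therefore verify that in every such order-dependent scenario the two surviving sections meeting at $c$ carry \emph{incompatible} orientations or far-endpoints, so that no table entry fires: for instance, a $D$-section using $c$ as its top can coexist with a leftover $P$-section leaving $c$ only when that $P$'s other endpoint is \emph{not} the diamond's bottom, which is precisely the situation in which $P,D\to D$ (and $P,B\to B$) fails. This reduces to a finite case check over the ordered pairs of section types that can meet at a branch vertex, combined with the observation that a surviving $P$-section out of $c$ would have been absorbed by the $B$/$D$ at $c$ had its far endpoint matched — an absorption the claiming procedure performs whenever it is possible. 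Once the structural lemma is in hand, the first family is immediate, so essentially all of the proof's weight sits on this bookkeeping for the second family.
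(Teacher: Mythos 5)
Your plan is correct in outline, and its kernel for the ``branch-vertex'' merges coincides with the paper's argument, but you take a substantially more structured route than the paper does. The paper's entire proof is a five-line contradiction: if two sections of $\Delta$ could be merged, then ``regardless of which $\lambda_i$ or $\mu_i$ appeared first in $\eta$, Algorithm~\ref{alg:decomp2} would have created a section with all the $\lambda_i$ and $\mu_i$ in the same iteration.'' That is exactly your second family, and the paper asserts it without the finite case check you (rightly) identify as the real work — in particular it never confronts the order-dependence of the greedy claiming, which is where all the subtlety lives. What the paper's proof does \emph{not} cover at all is your first family: the merges $P,P\to P$, $B,R\to D$ and $P,B,R\to D$ join sections that Algorithm~\ref{alg:decomp2} builds in \emph{different} iterations, so the ``same iteration'' argument simply does not apply to them; they have to be excluded by your structural lemma that Phase one's sequences are maximal branch-free paths, so no sequence (hence no section boundary) can terminate at a vertex of in- and out-degree one. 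That lemma, the explicit validity argument (edge-disjointness, coverage, overlap only at top/bottom nodes), and the two-family split over the merge table are all absent from the paper, which proves only minimality and leaves validity implicit. In short: your approach buys a proof that actually accounts for every row of the merge table and makes the Phase-one invariant explicit, at the cost of a longer case analysis; the paper buys brevity by asserting as ``easy to see'' precisely the two points (order-independence of the exhaustive collection, and the impossibility of the concatenation-style merges) that need justification. Do carry out the finite case check you defer — that bookkeeping is the substance of the theorem — but your plan contains all the ingredients needed to do so.
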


\begin{proof}
Suppose there exists a decomposition $\Delta' < \Delta$. Thus, there exists two or more sections in $\Delta$ which can be merged. Without loss of generality, suppose there are only two sections $\theta_1, \theta_2$ which can be merged. Consider the sequences $\lambda_1, \dots, \lambda_n$ and $\mu_1, \dots, \mu_n$ that were combined to make $\theta_1$ and $\theta_2$ respectively. Then, it is easy to see that regardless of which $\lambda_i$ or $\mu_i$ appeared first in $\eta$, Algorithm~\ref{alg:decomp2} would have created a section with all the $\lambda_i$ and $\mu_i$ in the same iteration. Thus, there cannot be two or more sections which can be merged and therefore there is no $\Delta' < \Delta$.
\end{proof}

\section{Experiments}

In this section, we present the results of our experiments, which have the following goals:

\begin{itemize}
\item Observe the performance of the Plank strategy in various types of networks.
\item Observe how the Plank strategy performs as the number of searchers available increases.
\item Observe how the Plank strategy performs as the size of the network grows.
\item Observe how the Plank strategy performs as we vary the size of the $k$-hubset.
\item Study how the Plank strategy performs as we vary size, number of searchers, and network structure on random DAGs.
\end{itemize}

For the task of computing an FVS, we employ a heuristic introduced in \cite{Eades1993} for computing a feedback arc set. We take the resulting edge set and place a permanent guard on the source node of each edge. However, if the end node of an edge already has a permanent guard we do not need to place an additional permanent guard on its source node.

Finally, we note that for the majority of our datasets the direction of the edges represents a following/trust relation which we reverse to move to an influence relation.

\subsection{Online Networks}

For each of our networks we run the Plank algorithm on the obtained DAG with $s$ ranging from $0.5 - 3\%$ of the size of the network increasing in $0.25\%$ increments. Additionally, we test three $k$-hubset sizes removing 1\%, 3\%, and 5\% of the number of nodes in the network. Then, we plot the number of steps in the resulting search strategy and the ratio of the length of the strategy to the lower bound. In each plot, the blue line represents no $k$-hubset was removed while purple, yellow, and green lines represent $k$-hubsets of size 1\%, 3\%, and 5\% respectively.\\

\begin{figure}[h]
	\centering
	\begin{minipage}{0.45\linewidth}
		\centering
		\includegraphics[width=\textwidth]{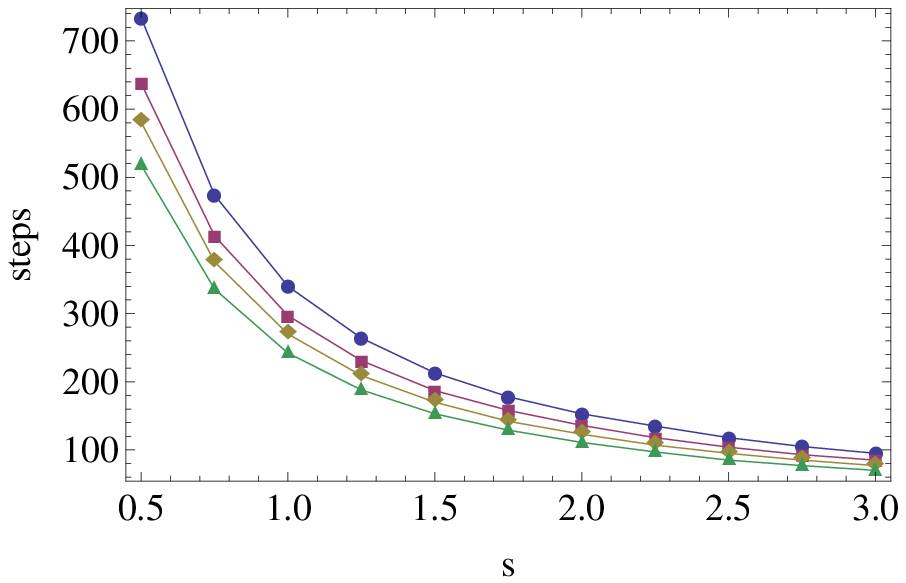}
		\caption{Wiki-Vote strategy lengths.}
		\label{fig:wiki-vote_steps}
	\end{minipage}
	\hspace{0.2cm}
	\begin{minipage}{0.45\linewidth}
		\centering
		\includegraphics[width=\textwidth]{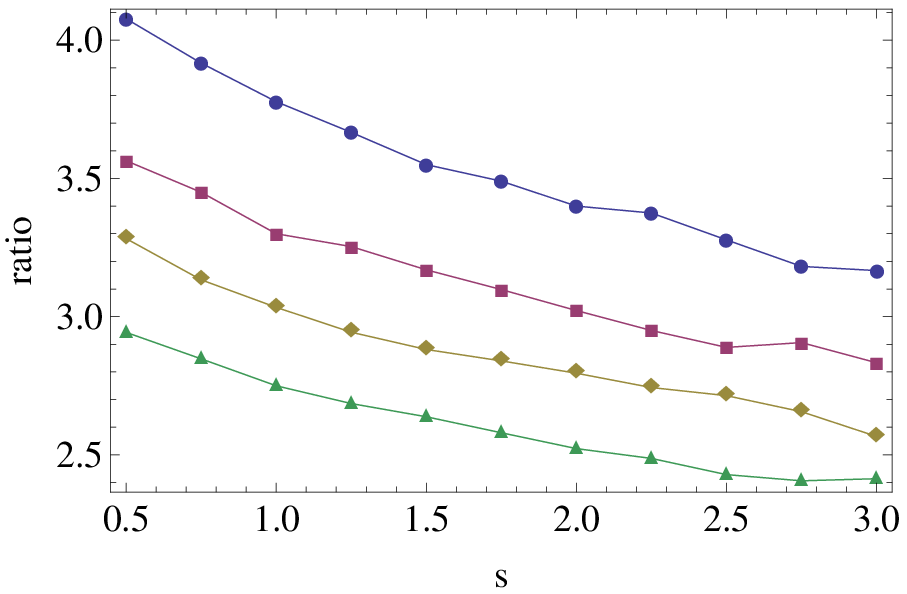}
		\caption{Wiki-Vote approximation ratios.}
		\label{fig:wiki-vote_ratio}
	\end{minipage}
\end{figure}

\begin{figure}[h]
	\centering
	\begin{minipage}{0.45\linewidth}
		\centering
		\includegraphics[width=\textwidth]{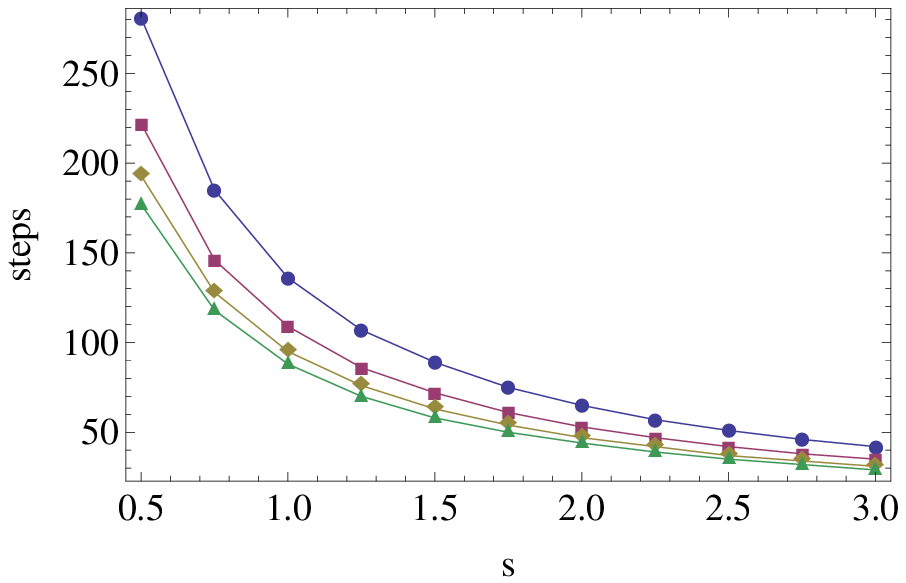}
		\caption{sign-slashdot strategy lengths.}
		\label{fig:sign-slashdot_steps}
	\end{minipage}
	\hspace{0.2cm}
	\begin{minipage}{0.45\linewidth}
		\centering
		\includegraphics[width=\textwidth]{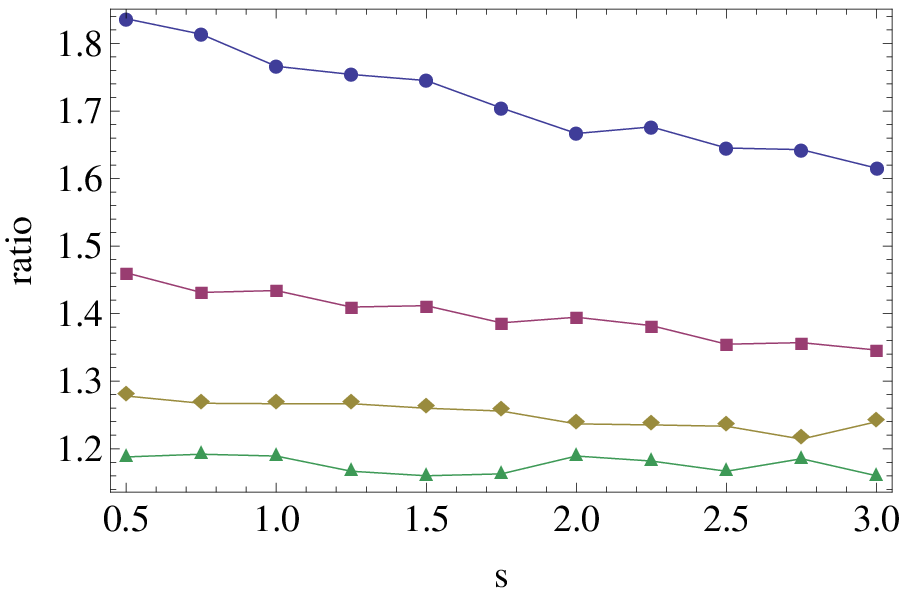}
		\caption{sign-slashdot approximation ratios.}
		\label{fig:sign-slashdot_ratio}
	\end{minipage}
\end{figure}

\begin{figure}[h]
	\centering
	\begin{minipage}{0.45\linewidth}
		\centering
		\includegraphics[width=\textwidth]{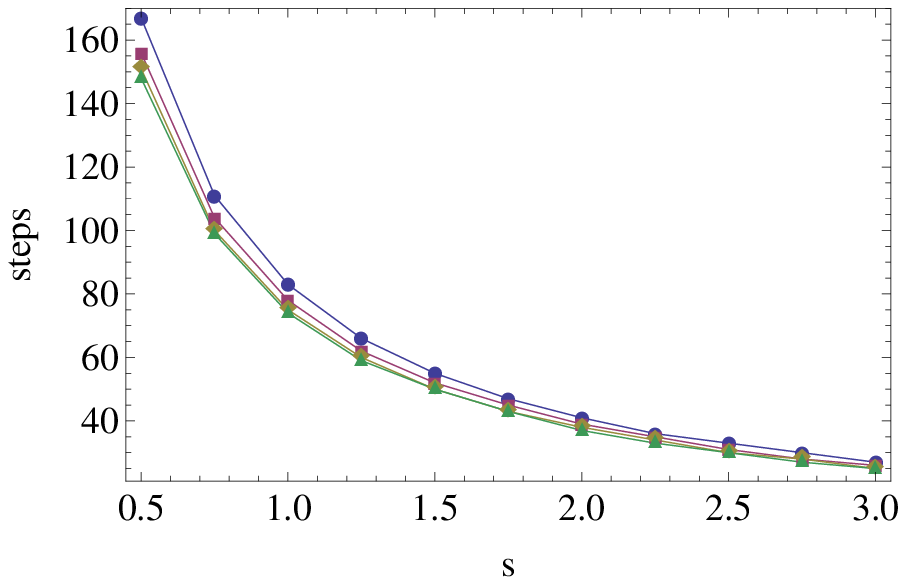}
		\caption{sign-epinions strategy lengths.}
		\label{fig:sign-epinions_steps}
	\end{minipage}
	\hspace{0.2cm}
	\begin{minipage}{0.45\linewidth}
		\centering
		\includegraphics[width=\textwidth]{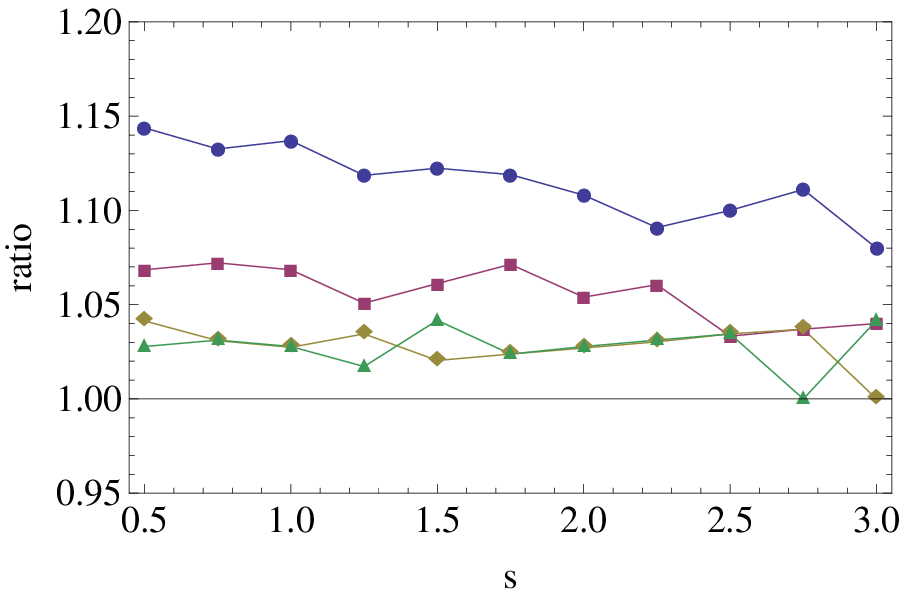}
		\caption{sign-epinions approximation ratios.}
		\label{fig:sign-epinions_ratio}
	\end{minipage}
\end{figure}

\begin{figure}[h]
	\centering
	\begin{minipage}{0.45\linewidth}
		\centering
		\includegraphics[width=\textwidth]{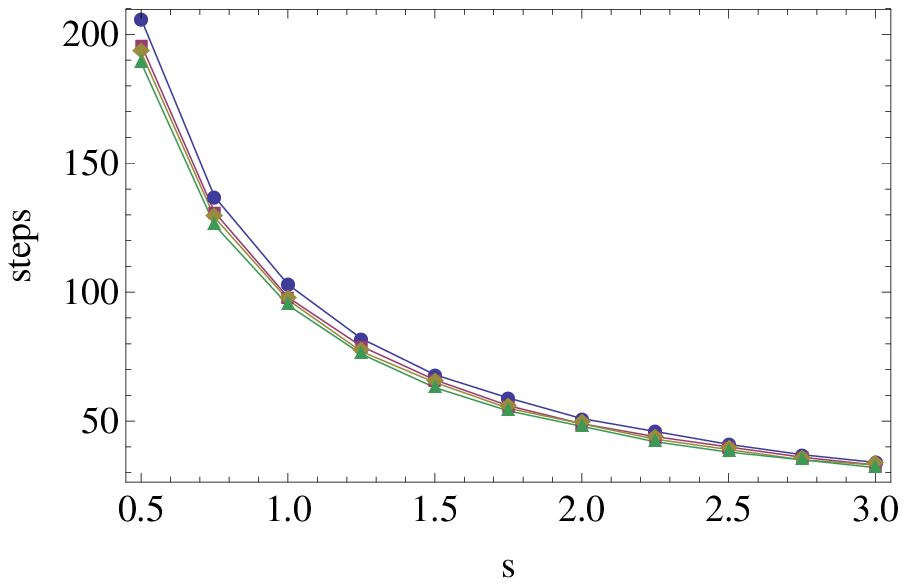}
		\caption{email-EU strategy lengths.}
		\label{fig:email_steps}
	\end{minipage}
	\hspace{0.2cm}
	\begin{minipage}{0.45\linewidth}
		\centering
		\includegraphics[width=\textwidth]{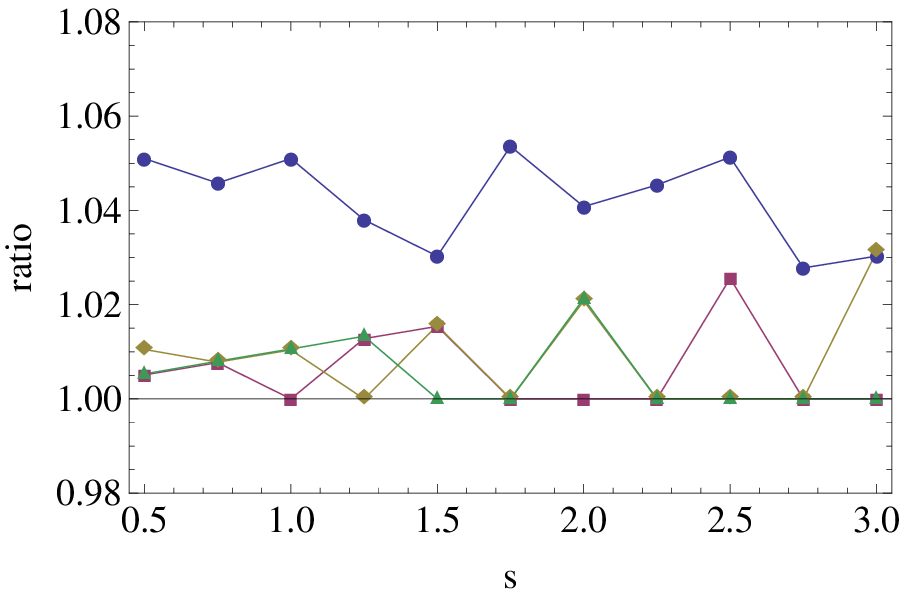}
		\caption{email-EU approximation ratios.}
		\label{fig:email_ratio}
	\end{minipage}
\end{figure}

\begin{figure}[h]
	\centering
	\begin{minipage}{0.45\linewidth}
		\centering
		\includegraphics[width=\textwidth]{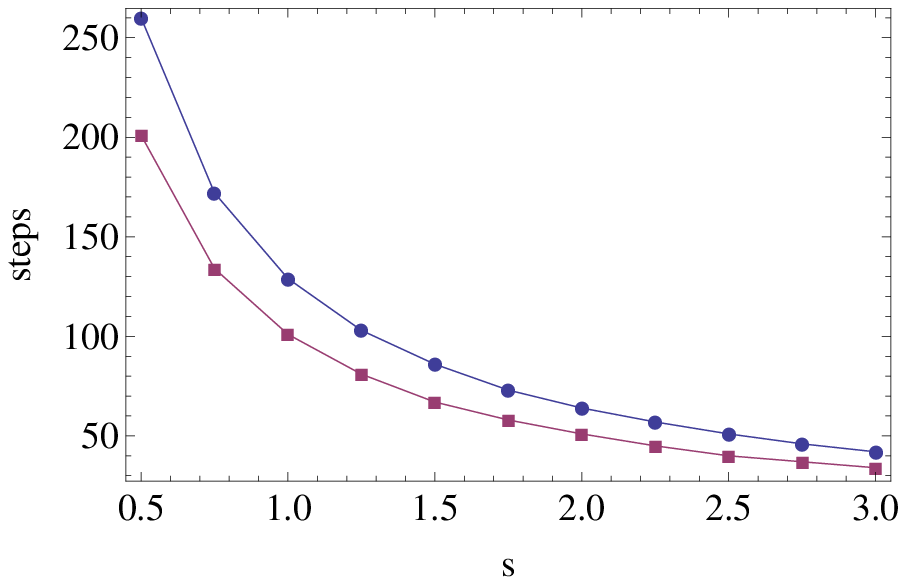}
		\caption{higgs-retweet strategy lengths.}
		\label{fig:higgs-rt_steps}
	\end{minipage}
	\hspace{0.2cm}
	\begin{minipage}{0.45\linewidth}
		\centering
		\includegraphics[width=\textwidth]{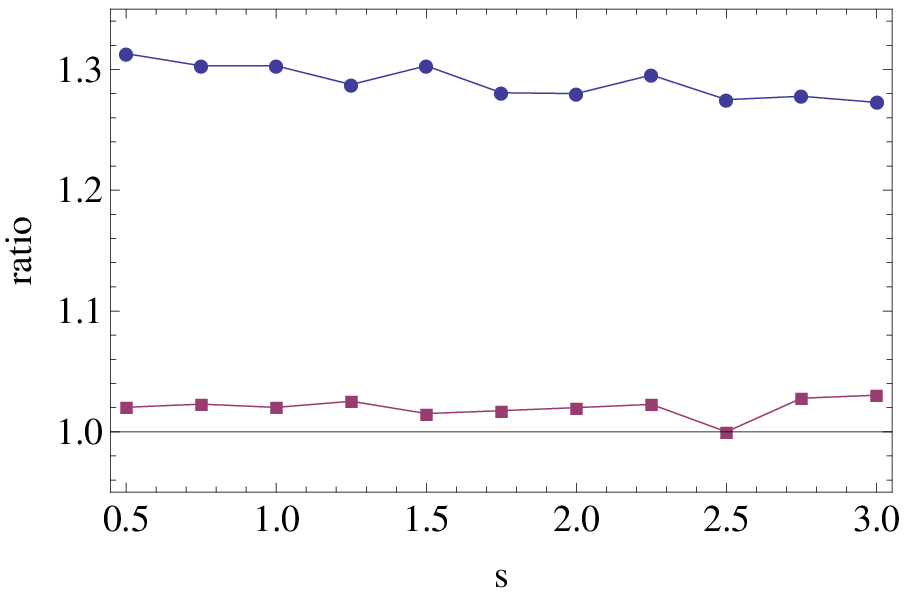}
		\caption{higgs-retweet approximation ratios.}
		\label{fig:higgs-rt_ratio}
	\end{minipage}
\end{figure}

\emph{Wiki-Vote: }First, we look at the Wiki-Vote dataset from \cite{Leskovec2010}. An edge in this network from user A to B indicates that A voted for B to become an administrator. The wiki-vote dataset contains 7,116 nodes and 103,689 edges. The FVS computed contained $11.78 \%$ of the network's nodes.

Figure \ref{fig:wiki-vote_steps} shows the number of steps in the resulting search strategy with increasing number of searchers. Figure \ref{fig:wiki-vote_ratio} plots the approximation ratio versus the number of searchers. The approximation ratio drops steadily in each case with larger $k$-hubsets performing better. %The length drops from $734$ to $95$ steps while the approximation ratio drops from $4.08$ to $3.17$. \\

\emph{Signed Slashdot: }Next, we look at the signed Slashdot dataset from \cite{Leskovec2010}. An edge in this network from user A to B indicates that B is a friend of A's. The signed Slashdot dataset contains 77.350 nodes and 516,575 edges. The FVS computed contained $16.46 \%$ of the network's nodes.

Figure \ref{fig:sign-slashdot_steps} shows the number of steps in the resulting search strategy with increasing number of searchers. Figure \ref{fig:sign-slashdot_ratio} plots the approximation ratio versus the number of searchers. Again, we see that larger $k$-hubsets produce better approximation ratios. %The length drops from $281$ to $42$ steps while the approximation ratio drops from $1.84$ to $1.61$. \\

\emph{Signed Epinions: }The signed Epinions trust network from \cite{Leskovec2010} contains an edge from user A to B if A trusts B on the Epinions review site. The signed Epinions dataset contains 131,828 nodes and 841,372 edges. The FVS computed contained $16.04 \%$ of the network's nodes.

Figures \ref{fig:sign-epinions_steps} and \ref{fig:sign-epinions_ratio} show the number of steps in the resulting search strategy and approximation ratio respectively with increasing number of searchers. Here we see that as a near optimal approximation ratio is approached the larger $k$-hubsets lose their effect. %The length drops from $167$ to $27$ steps while the approximation ratio drops from $1.14$ to $1.08$. \\

\emph{Email Communication Network: }The email-EU network from \cite{Leskovec2007} was generated using email data from a large European research institution. The network contains an edge from user A to B if A emailed B. The network contains 265,214 nodes and 420,045 edges. The FVS computed contained only $2.45 \%$ of the network's nodes indicating a very DAG-like structure.

The number of steps required by the search strategy and approximation ratio are shown in figures \ref{fig:email_steps} and \ref{fig:email_ratio} respectively. For this dataset the $k$-hubset cases all hover near an optimal approximation ratio while the case with no $k$-hubset sits slightly above. %The length drops from $206$ to $34$ steps while the approximation ratio remains close to optimal. \\

\emph{Twitter Retweet Network: }The higgs-retweet network from \cite{Domenico2013} maps the retweets by users of Twitter during the announcement of the discovery of the Higgs Boson. The network contains an edge from user A to B if A retweeted B. The network contains 425,008 nodes and 733,647 edges. The FVS computed contained $1.13 \%$ of the network's nodes also indicating a very DAG-like structure.

Figure \ref{fig:higgs-rt_steps} shows the number of steps in the resulting search strategy with increasing number of searchers and Figure \ref{fig:higgs-rt_ratio} plots the approximation ratio versus the number of searchers. Here we only include the 1\% $k$-hubset case since, as we saw with the email-EU dataset, larger $k$-hubsets don't provide any improvement once we near an optimal approximation ratio. \\ %The length drops from $260$ to $42$ steps while the approximation ratio drops from $1.31$ to $1.27$.

Next, we plot how well the sliding FVS saves searchers compared to the size of the FVS computed for each dataset. The Wiki-Vote dataset did not benefit from the sliding FVS most likely due to its high density compared to the other networks. Then, as we see in Figure \ref{fig:wiki-vote_fvs}, the removal of each $k$-hubset only increases the size of the FVS with increasing $k$-hubset size. Figure \ref{fig:sign-slashdot_fvs} shows searcher savings in all cases for the signed Slashdot network, and interestingly, the 1\% $k$-hubset produces the minimum sliding FVS. Then, Figure \ref{fig:sign-epinions_fvs} shows that the searcher savings increase with increasing $k$-hubset size in the signed Epinions network. Similar to the signed Slashdot network, Figure \ref{fig:email_fvs} shows an optimal sliding FVS when using the 1\% $k$-hubset for the email-EU network. Figure \ref{fig:higgs-rt_fvs} shows a sliding FVS increasing in size with a larger $k$-hubset for the higgs-retweet network. The sliding FVS results for the higgs-retweet network can be fairly easily predicted from the fact that the FVS computed was already very small and thus any permanent guards enforced by the $k$-hubset exceed this value.

\begin{figure}[h]
	\centering
	\begin{minipage}{0.45\linewidth}
		\centering
		\includegraphics[width=\textwidth]{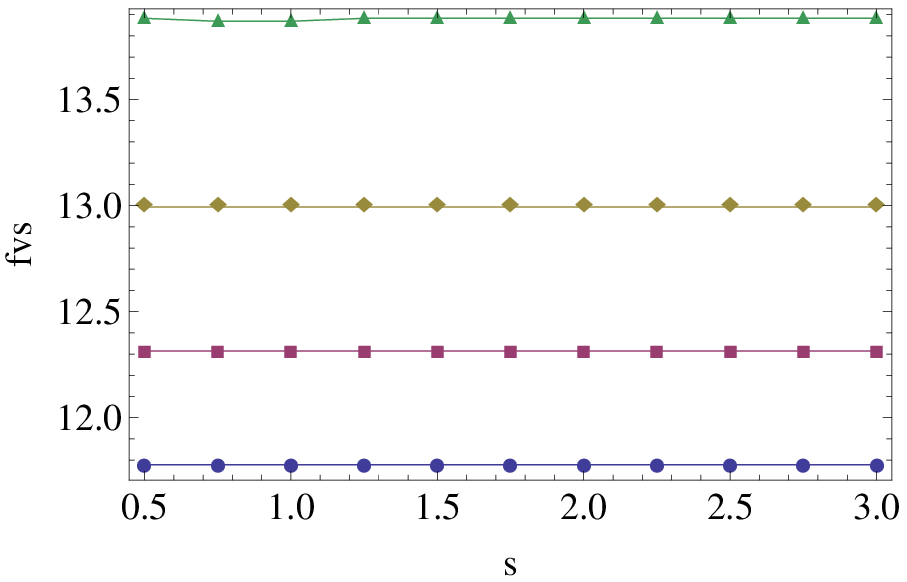}
		\caption{Wiki-Vote sliding FVS.}
		\label{fig:wiki-vote_fvs}
	\end{minipage}
	\hspace{0.2cm}
	\begin{minipage}{0.45\linewidth}
		\centering
		\includegraphics[width=\textwidth]{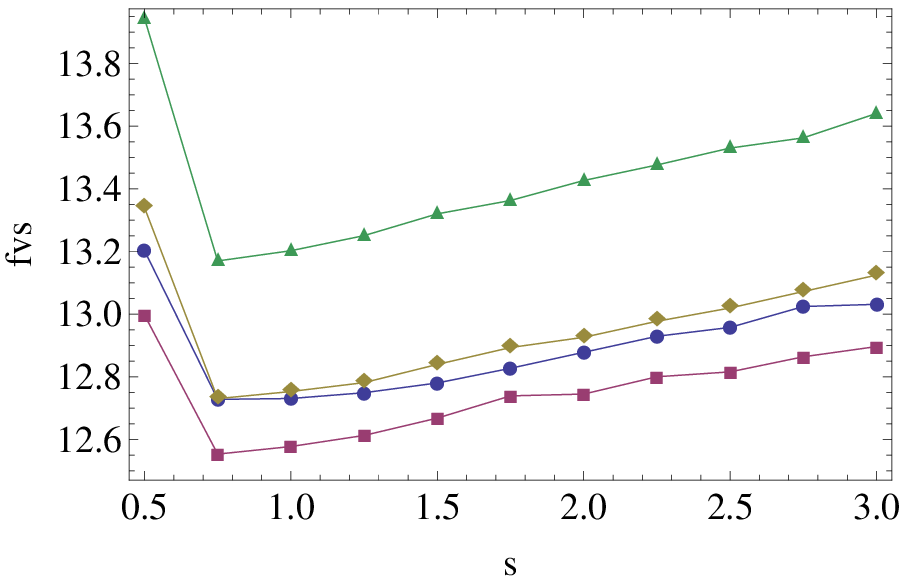}
		\caption{sign-slashdot sliding FVS.}
		\label{fig:sign-slashdot_fvs}
	\end{minipage}
\end{figure}

\begin{figure*}
	\centering
	\begin{minipage}{0.31\linewidth}
		\centering
		\includegraphics[width=\textwidth]{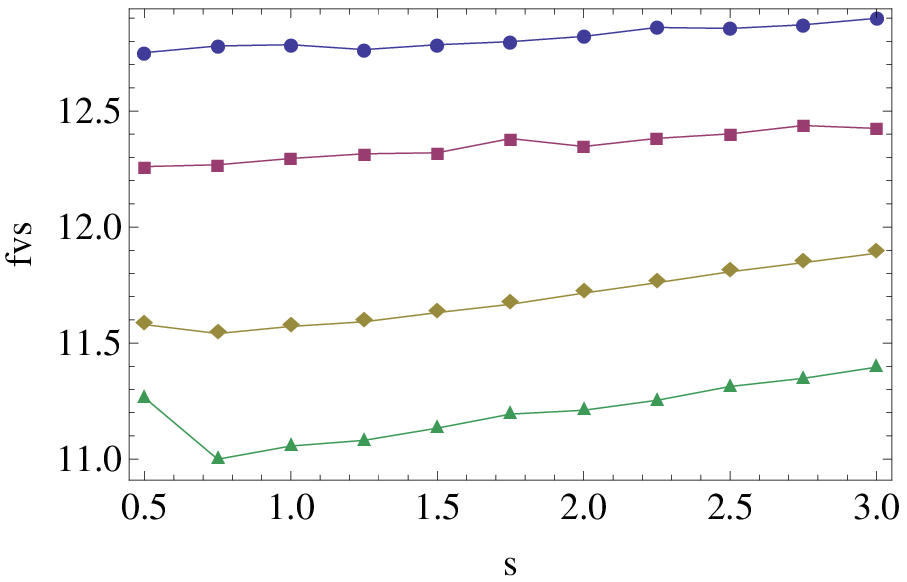}
		\caption{sign-epinions sliding FVS.}
		\label{fig:sign-epinions_fvs}
	\end{minipage}
	\hspace{0.2cm}
	\begin{minipage}{0.31\linewidth}
		\centering
		\includegraphics[width=\textwidth]{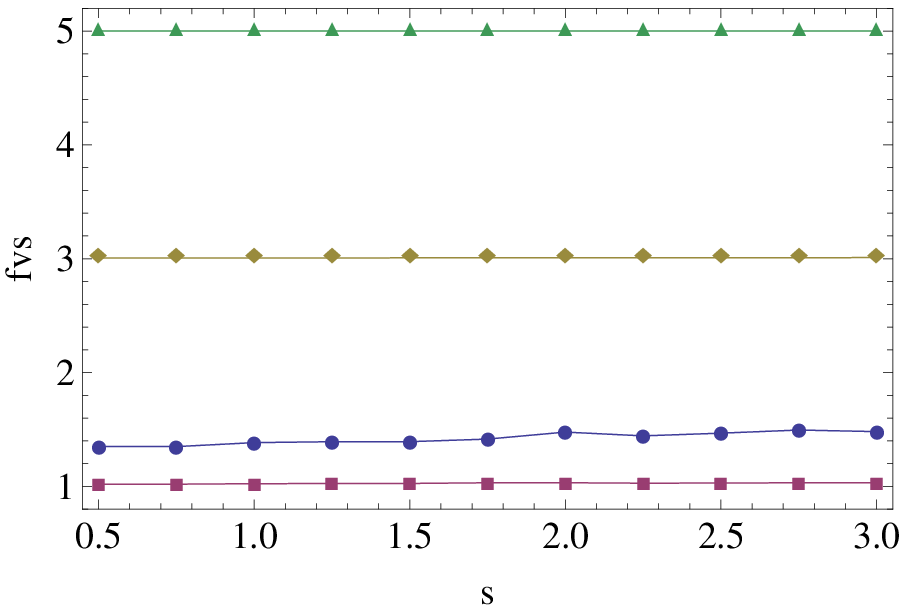}
		\caption{email-EU sliding FVS.}
		\label{fig:email_fvs}
	\end{minipage}
	\hspace{0.2cm}
	\begin{minipage}{0.31\linewidth}
		\centering
		\includegraphics[width=\textwidth]{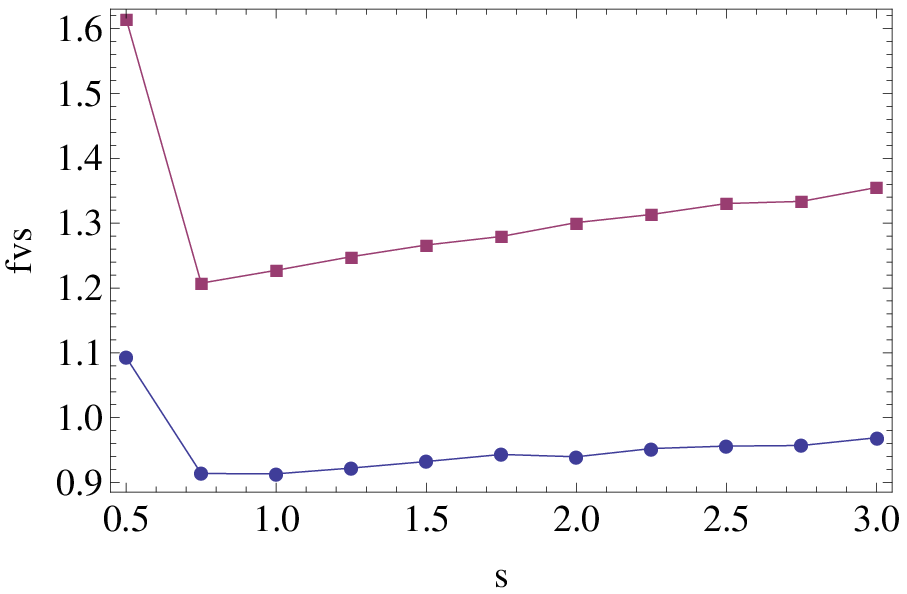}
		\caption{higgs-retweet sliding FVS.}
		\label{fig:higgs-rt_fvs}
	\end{minipage}
\end{figure*}

We note that the regularity in each of the plots showing the length of a strategy indicates that the potential loss between steps when there is a leftover searcher unable to clear an additional edge does not have a large effect on the number of steps required to clear the network.  Furthermore, near optimal approximation ratios indicate the DAG remaining after the removal of the FVS had a small overlap value. Finally, the Wiki-Vote dataset is the only network in which the overlap factor drove the approximation ratio above $2$. However, we see a good decrease ranging from $17.99$ - $22.34\%$ in the approximation ratio.

\subsection{Random DAGs}

Next, we consider the individual parameters of the system and investigate how the approximation ratio is affected as they are varied. For these tests, we generate random DAGs similar to the Erd\"{o}s-R\'{e}nyi model except we predetermine an ordering of the nodes, $(1 \ldots n)$, in the DAG and then randomly add edges from node $i$ to $j$ with probability $p$ provided $i$ comes before $j$ in the ordering. We generate five random DAGs for each data point and average the results.

First, we study how the approximation ratio behaves as the size of the network is increased. We fix $p=\frac{1}{n}$ and run the tests for $s=10, 25, 50$. Figure \ref{fig:random_size} shows the resulting approximation ratios as the network size increases from 1,000 to 20,000 nodes. We observe the ratios remain nearly constant as the network size is increased.

Next, we look at how the approximation ratio behaves as the number of searchers is increased. We fix $p=\frac{1}{n}$ and run the tests for $n=$ 5,000, $n=$ 10,000, and $n=$ 20,000. Figure \ref{fig:random_num} shows the resulting approximation ratios as the number of searchers increases from 0.2\% to 2\% of the network size. We see that the approximation ratio decreases as the number of searchers increases. \\

\begin{figure}[h]
	\centering
	\begin{minipage}{0.45\linewidth}
		\centering
		\includegraphics[width=\textwidth]{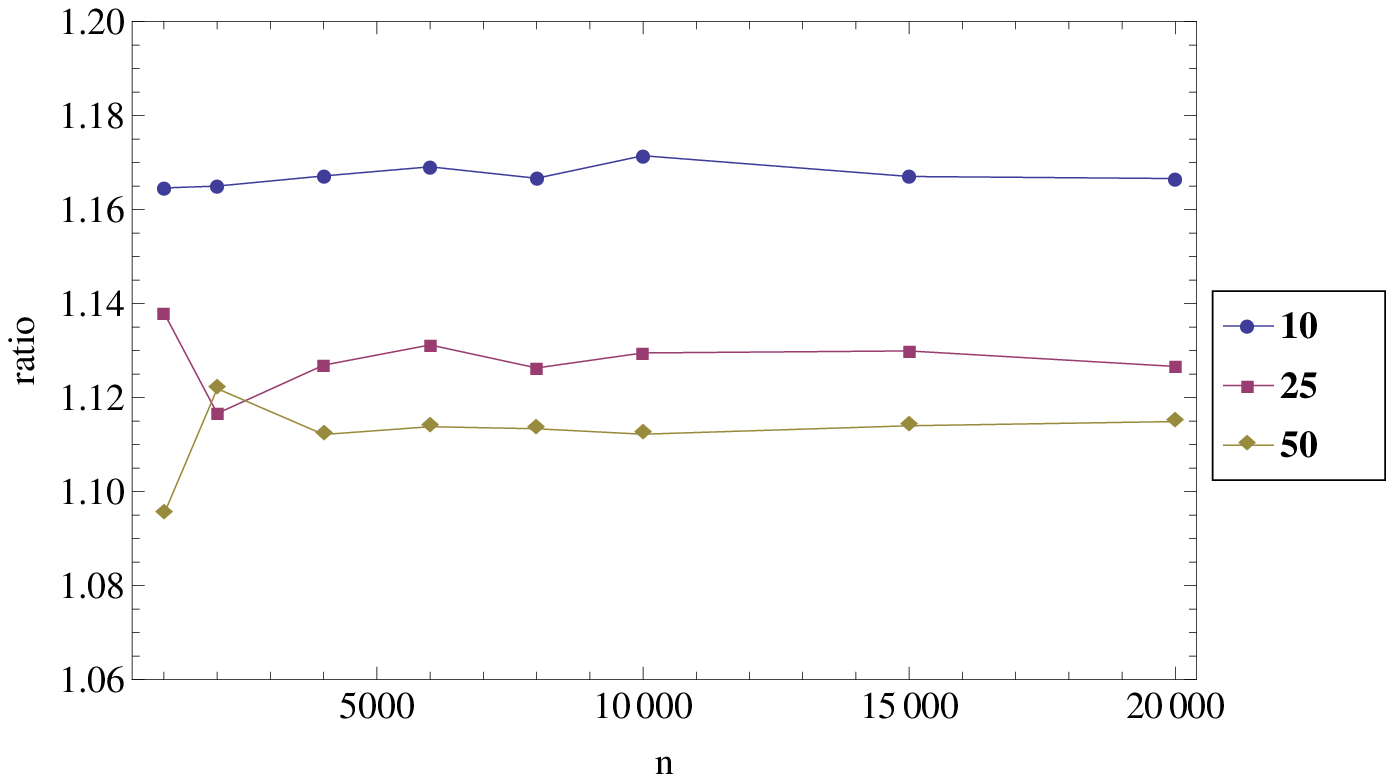}
		\caption{Effect of increasing network size on the approximation ratio.}
		\label{fig:random_size}
	\end{minipage}
	\hspace{0.2cm}
	\begin{minipage}{0.45\linewidth}
		\centering
		\includegraphics[width=\textwidth]{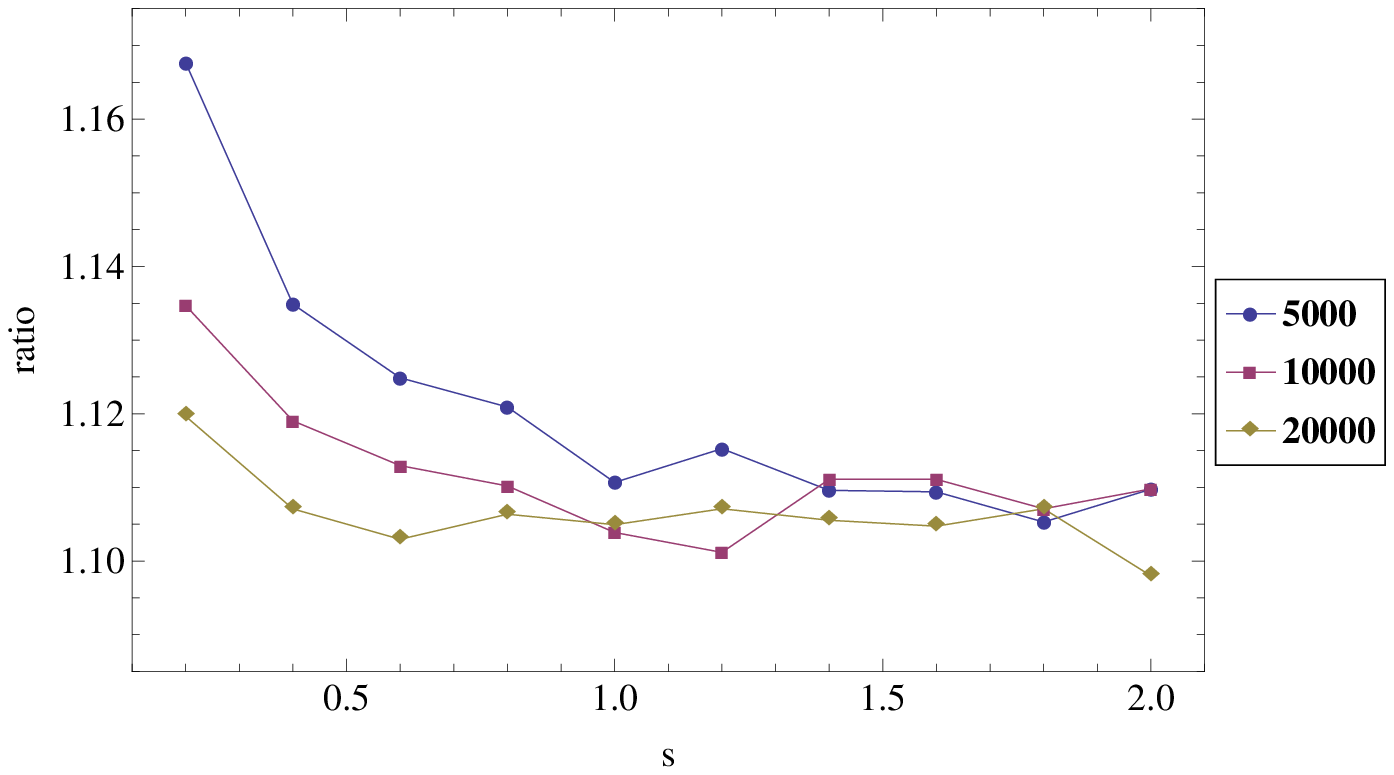}
		\caption{Effect of increasing searchers on the approximation ratio.}
		\label{fig:random_num}
	\end{minipage}
\end{figure}

\makebox[0.96\linewidth][s]{Futhermore, we produce random DAGs according to the} \\*Barab\'{a}si-Albert model \cite{Barabasi99} to replicate the power law structure exhibited in many online social networks. The Barab\'{a}si-Albert model takes three parameters $n$, $m$, and $m_0$. The graph begins with $m_0$ isolated nodes. New nodes are added to the graph one at a time until we have a graph with $n$ nodes. Each new node is connected to $m \le m_0$ existing nodes with a probability that is proportional to the number of edges that the existing nodes already have. We direct new edges from existing nodes to new nodes to maintain a DAG structure. We run the Plank algorithm on each DAG with $n=$ 20,000 and $s$ ranging from $0.5 - 3\%$ of the size of the network increasing in $0.25\%$ increments.

\begin{figure*}
	\centering
	\begin{minipage}{0.31\linewidth}
		\centering
		\includegraphics[width=\textwidth]{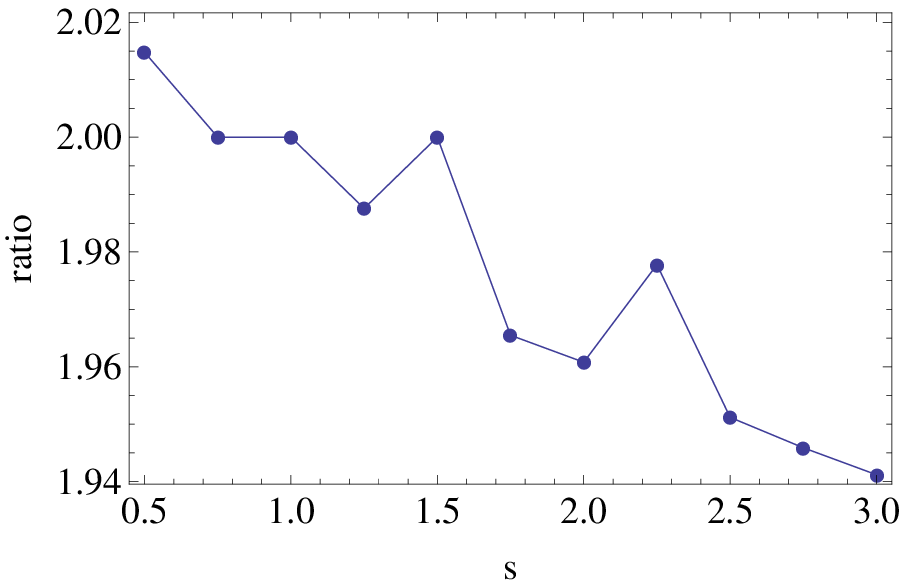}
		\caption{$m = m_0 = 2$.}
		\label{fig:random_pref_2_2}
	\end{minipage}
	\hspace{0.2cm}
	\begin{minipage}{0.31\linewidth}
		\centering
		\includegraphics[width=\textwidth]{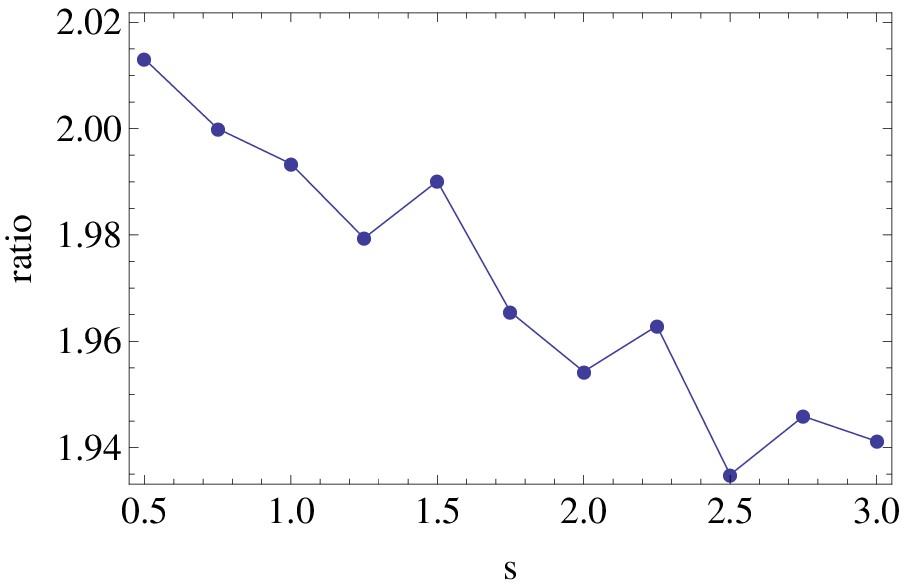}
		\caption{$m = 2$, $m_0 = 3$.}
		\label{fig:random_pref_2_3}
	\end{minipage}
	\hspace{0.2cm}
	\begin{minipage}{0.31\linewidth}
		\centering
		\includegraphics[width=\textwidth]{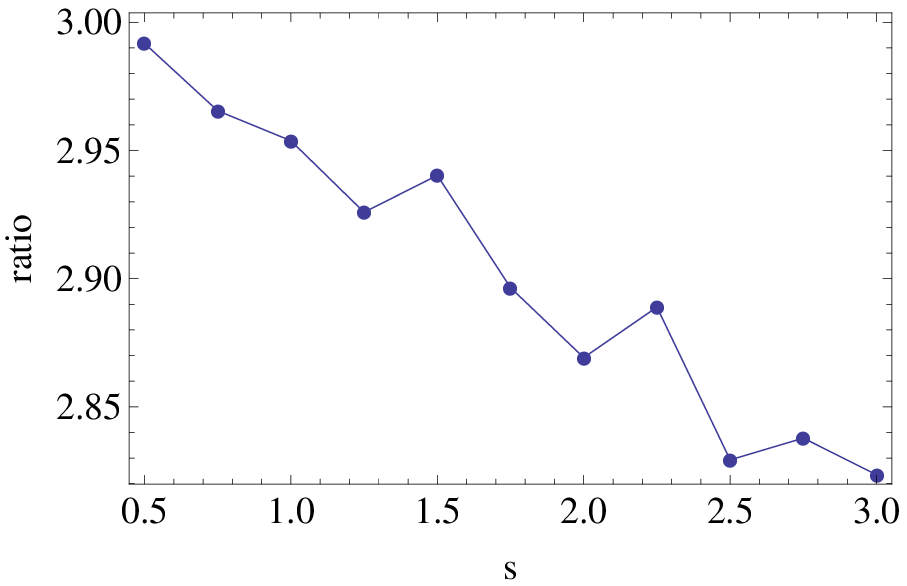}
		\caption{$m = 3$, $m_0 = 6$.}
		\label{fig:random_pref_3_6}
	\end{minipage}
	\caption{Barab\'{a}si-Albert DAG plots.}
\end{figure*}

In Figure \ref{fig:random_pref_2_2} we have $m = m_0 = 2$ and see a steady decrease in approximation ratio. Then, in Figure \ref{fig:random_pref_2_3} we investigate the effects of adding an additional preferential node where we observe a similar decrease in approximation ratio. Finally, in Figure \ref{fig:random_pref_3_6} we look at a Barab\'{a}si-Albert DAG in which there are $6$ preferential nodes and $3$ links are added with each new node in which a decreasing approximation ratio is also observed as the number of searchers (modestly) increases. In each case we observe good approximation ratios.

%Finally, we study how the Plank strategy performs as the density of the DAG is increased. We fix $n = 5,000$ and $s = 50$ and vary $p$ from $\frac{1}{2n}$ to $\frac{14}{n}$. Figure \ref{fig:random_density} shows the resulting approximation ratios which increases linearly as the expected number of edges incident on each node increases. This captures the potential loss we discussed induced by the overlap factor.

%\begin{figure}
%	\centering
%	\includegraphics[scale=0.5]{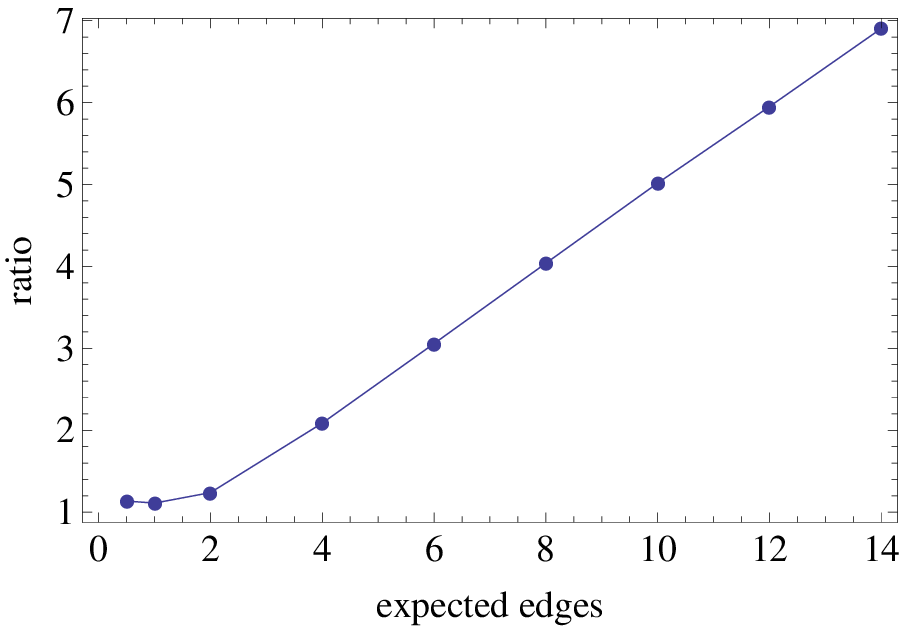}
%	\caption{Effect of increasing network density.}
%	\label{fig:random_density}
%\end{figure}

\section{Conclusion}

In this work we perform an extensive study of the problem of eliminating contamination spreading through a network. Specifically, we study the related graph searching problem which we prove is NP-hard even on DAGs and therefore an exact algorithm is infeasible for large networks. Consequently, we introduce a novel approximation algorithm for clearing DAGs which we incorporate into a procedure for clearing general digraphs. We experimentally test our algorithm on several large online networks and observe good performance in relation to the lower bound. Furthermore, we explore various parameters of the graph searching problem on random DAGs and discover the search time is unaffected by network size, yet significantly decreases with modest increases in searcher allocation.

\balance

\bibliographystyle{abbrv}
\bibliography{/Users/michaelesimp/Documents/UVic/Research/refs}

\end{document}